\newtheorem{theorem}{Theorem}
\newtheorem{lemma}{Lemma}
\newtheorem{definition}{Definition}
\newtheorem{corollary}{Corollary}
\title{Approximating Dense Max 2-CSPs}
\date{\today}
\author[1]{Pasin Manurangsi\thanks{Part of this work was completed while the author was at Massachusetts Institute of Technology.}}
\author[2]{Dana Moshkovitz\thanks{This material is based upon work supported by the National Science Foundation under Grant Number 1218547.}}
\affil[1]{Dropbox, Inc.\\
  San Francisco, CA 94107, USA\\
  \texttt{pasin@dropbox.com}}
\affil[2]{Massachusetts Institute of Technology \\
  Cambridge, MA 02139, USA\\
  \texttt{dmoshkov@mit.edu}}
\begin{document}
\maketitle


\begin{abstract}
In this paper, we present a polynomial-time algorithm that approximates sufficiently high-value {\sc Max 2-CSP}s on sufficiently dense graphs to within $O(N^{\varepsilon})$ approximation ratio for \emph{any} constant $\varepsilon > 0$. Using this algorithm, we also achieve similar results for free games, projection games on sufficiently dense random graphs, and the {\sc Densest $k$-Subgraph} problem with sufficiently dense optimal solution. Note, however, that algorithms with similar guarantees to the last algorithm were in fact discovered prior to our work by Feige et al. and Suzuki and Tokuyama.

In addition, our idea for the above algorithms yields the following by-product: a quasi-polynomial time approximation scheme (QPTAS) for satisfiable dense {\sc Max 2-CSP}s with better running time than the known algorithms. \\

{\bf Keywords:} {\sc Max 2-CSP}, Dense Graphs, {\sc Densest $k$-Subgraph}, QPTAS, Free Games, Projection Games
\end{abstract}

\section{Introduction}

Maximum constraint satisfaction problem ({\sc Max CSP}) is a problem of great interest in approximation algorithms since it encapsulates many natural optimization problems; for instance, {\sc Max $k$-SAT}, {\sc Max-Cut}, {\sc Max-DiCut}, {\sc Max $k$-Lin}, projection games, and unique games are all families of {\sc Max CSP}s. In {\sc Max CSP}, the input is a set of variables, an alphabet set, and a collection of constraints. Each constraint's domain consists of all the possible assignments to a subset of variables. The goal is to find an assignment to all the variables that satisfies as many constraints as possible.

In this paper, our main focus is on the case where each constraint depends on exactly $k=2$ variables and the alphabet size is large. This case is intensively researched in hardness of approximation and multi-prover games.

For Max $2$-CSP with large alphabet size, the best known polynomial-time approximation algorithm, due to Charikar et al.~\cite{CHK}, achieves an approximation ratio of $O((nq)^{1/3})$ where $n$ is the number of variables and $q$ is the alphabet size. On the other hand, it is known that, there is no polynomial-time $2^{\log^{1-\delta} (nq)}$-approximation algorithm for {\sc Max $2$-CSP} unless NP $\not\subseteq$ DTIME$(n^{polylog(n)})$~\cite{Raz}. Moreover, it is believed that, for some constant $c > 0$, no polynomial-time $O((nq)^c)$-approximation algorithm exists for projection games, a family of {\sc Max $2$-CSP} we shall introduce later, unless P = NP~\cite{M}. This is also known as the Projection Games Conjecture (PGC). As a result, if the PGC holds, one must study special cases in order to go beyond polynomial approximation ratio for {\sc Max 2-CSP}.

One such special case that has been particularly fruitful is dense {\sc Max $2$-CSP} where density is measured according to number of constraints, i.e., an instance is $\delta$-dense if there are $\delta n^2$ constraints. Note that, for convenience, we always assume that there is at most one constraint on a pair of variables. In other words, we form a simple graph by letting vertices represent the variables and edges represent the constraints. This is the interpretation that we will use throughout the paper. According to this view, $\delta n$ is the average degree of the graph.

In 1995, Arora, Karger and Karpinski~\cite{AKK95} invented a polynomial-time approximation scheme (PTAS) for dense Max $2$-CSP when the density $\delta$ and alphabet size $q$ are constants. More specifically, for any constant $\varepsilon > 0$, the algorithm achieves an approximation ratio $1 + \varepsilon$ and runs in time $O(n^{1/\varepsilon^2})$. Unfortunately, the running time becomes quasi-polynomial time when $q$ is not constant.

Another line of development of such PTASs centers around subsampling technique (e.g. ~\cite{AIM, Alon:2003:RSA:963875.963877, BMHS11}). In summary, these algorithms function by randomly sampling the variables according to some distribution and performing an exhaustive search on the induced instance. Since the sampled set of variables is not too large, the running time is not exponential. However, none of these algorithm achieves polynomial running time for large alphabets. In particular, all of them are stuck at quasi-polynomial running time.

Since none of these algorithms runs in polynomial time for large alphabet, a natural and intriguing question is how good a polynomial-time approximation algorithm can be for dense {\sc Max 2-CSP}s. In this paper, we partially answer this question by providing a polynomial-time approximation algorithm for dense high-value {\sc Max 2-CSP}s that achieves $O((nq)^\varepsilon)$ approximation ratio for any constant $\varepsilon > 0$. Moreover, our technique also helps us come up with a quasi-polynomial time approximation scheme for satisfiable {\sc Max 2-CSP}s with running time asymptotically better than that those from~\cite{AIM, Alon:2003:RSA:963875.963877, AKK95, BMHS11}.

The central idea of our technique is a trade-off between two different approaches: greedy assignment algorithm and ``choice reduction'' algorithm. In summary, either a simple greedy algorithm produces an assignment that satisfies many constraints or, by assigning an assignment to just one variable, we can reduce the number of optimal assignment candidates of other variables significantly. The latter is what we call the choice reduction algorithm. By applying this argument repeatedly, either one of the greedy assignments gives a high-value assignment, or we are left with only few candidate labels for each variable. In the latter case, we can then just pick a greedy assignment at the end.

Not only that our technique is useful for {\sc Max 2-CSP}, we are able to obtain approximation algorithms for other problems in dense settings as well. The first such problem is free games, which can be defined simply as {\sc Max $2$-CSP} on balanced complete bipartite graphs. While free games have been studied extensively in the context of parallel repetition~\cite{BRR+09, S13} and as basis for complexity and hardness results~\cite{AIM, BKW15}, the algorithm aspect of it has not been researched as much. In fact, apart from the aforementioned algorithms for dense {\sc Max 2-CSP} that also works for free games, we are aware of only two approximation algorithms, by Aaronson et al.~\cite{AIM} and by Brandao and Harrow~\cite{BH13}, specifically developed for free games. Similar to the subsampling lemmas, these two algorithms are PTASs when $q$ is constant but, when $q$ is large, the running times become quasi-polynomial. Interestingly, our result for dense {\sc Max 2-CSP} directly yields a polynomial-time algorithm that can approximate free games within $O((nq)^\varepsilon)$ factor for any constant $\varepsilon > 0$, which may be the first non-trivial approximation algorithm for free games with such running time.

Secondly, our idea is also applicable for projection games. The projection games problem (also known as {\sc Label Cover}) is {\sc Max 2-CSP} on a bipartite graph where, for each assignment to a left vertex of an edge, there is exactly one satisfiable assignment to the other endpoint of the edge. {\sc Label Cover} is of great significance in the field of hardness of approximation since almost all NP-hardness of approximation results known today are reduced from the NP-hardness of approximation of projection games (e.g.~\cite{BGS,Has97}).

The current best polynomial-time approximation algorithm for satisfiable projection games is the authors' with $O((nq)^{1/4})$ ratio~\cite{MM13}. Moreover, as mentioned earlier, if the PGC is true, then, in polynomial time, approximating {\sc Label Cover} beyond some polynomial ratio is unlikely. In this paper, we exceed this bound on random balanced bipartite graphs with sufficiently high density by proving that, in polynomial time, one can approximate satisfiable projection games on such graphs to within $O((nq)^\varepsilon)$ factor for any constant $\varepsilon > 0$.

Finally, we show a similar result for {\sc Densest $k$-Subgraph}, the problem of finding a size-$k$ subgraph of a given graph that contains as many edges as possible. Finding best polynomial-time approximation algorithm for {\sc Densest $k$-Subgraph}({\sc D$k$S}) is an open question in the field of approximation algorithms. Currently, the best known algorithm for {\sc D$k$S} achieves an approximation ratio of $O(n^{1/4+\varepsilon})$ for any constant $\varepsilon > 0$~\cite{BCCFV}. On the other hand, however, we only know that there is no PTAS for {\sc D$k$S} unless P=NP~\cite{Khot04}.

Even though {\sc Densest $k$-Subgraph} on general graphs remains open, the problem is better understood in some dense settings. More specifically, Arora et al.~\cite{AKK95} provided a PTAS for the problem when the given graph is dense and $k = \Omega(N)$ where $N$ is the number of vertices of the given graph. Later, Feige et al.~\cite{FPK01} and Suzuki and Tokuyama~\cite{ST05} showed that, if we only know that the optimal solution is sufficiently dense, we can still approximate the solution to within any polynomial ratio in polynomial time. Using our approximation algorithm for dense {\sc Max 2-CSP}, we are able to construct a polynomial-time algorithm for {\sc Densest $k$-Subgraph} with similar conditions and guarantees as those of the algorithms from~\cite{FPK01} and~\cite{ST05}.

The theorems we prove in this paper are stated in Section~\ref{s:results} after appropriate preliminaries in the next section.

\section{Preliminaries and Notation}

In this section, we formally define the problems we focus on and the notation we use throughout the paper. First, to avoid confusion, let us state the definition of approximation ratio for the purpose of this paper. \\

\begin{definition}
  An approximation algorithm for a maximization problem is said to have an approximation ratio $\alpha$ if the output of the algorithm is at least $1/\alpha$ times the optimal solution.
\end{definition}

Note here that the approximation ratio as defined above is always at least one.

Next, before we define our problems, we review the standard notation of density of a graph. \\

\begin{definition}
  A simple undirected graph $G = (V, E)$ is defined to be of density $|E|/|V|^2$.
\end{definition}

Moreover, for a graph $G$ and a vertex $u$, we use $\Gamma^G(u)$ to denote the set of neighbors of $u$ in $G$. We also define $\Gamma_2^G(u)$ to denote the set of neighbors of neighbors of $u$ in $G$, i.e., $\Gamma_2^G(u) = \Gamma^G(\Gamma^G(u))$. When it is unambiguous, we will leave out $G$ and simply write $\Gamma(u)$ or $\Gamma_2(u)$.

Now, we will define the problems starting with {\sc Max 2-CSP}. \\

\begin{definition}
An instance $(q, V, E, \{C_e\}_{e \in E})$ of {\sc Max $2$-CSP} consists of
\begin{itemize}
  \item a simple undirected graph $(V, E)$, and
  \item for each edge $e = (u, v) \in E$, a constraint (or constraint) $C_e: [q]^2 \to \{0, 1\}$ where $[q]$ denotes $\{1, 2, \dots, q\}$.
\end{itemize}
The goal is to find an assignment (solution) $\varphi: V \to [q]$ that maximizes the number of constraints $C_e$'s that are satisfied, i.e. $C_{(u, v)}(\varphi(u), \varphi(v)) = 1$. In other words, find an assignment $\varphi: \{x_1, \dots, x_n\} \to [q]$ that maximizes $\sum_{(u, v) \in E} C_{(u, v)}(\varphi(u), \varphi(v))$. The value of an assignment is defined as the fraction of edges satisfied by it and the value of an instance is defined as the value of the optimal assignment.
\end{definition}

A {\sc Max $2$-CSP} instance $(q, V, E, \{C_e\}_{e \in E})$ is called $\delta$-dense if the graph $(V, E)$ is $\delta$-dense. Throughout the paper, we use $n$ to denote the number of vertices (variables) $|V|$ and $N$ to denote $nq$, which can be viewed as the size of the problem.

Free games and projection games are specific classes of {\sc Max 2-CSP}, which can be defined as follows. Note that $n, N$, density and value are defined in a similar fashion for free games and projection games as well. \\

\begin{definition} A free game $(q, A, B, \{C_{a, b}\}_{(a, b) \in A \times B})$ consists of
\begin{itemize}
  \item Two sets $A, B$ of equal size, and
  \item for $(a, b) \in A \times B$, a constraint $C_{a, b} : [q]^2 \to \{0, 1\}$.
\end{itemize}
The goal is to find an assignment $\varphi : A \cup B \to [q]$ that maximizes the number of edges $(a, b) \in A \times B$ that are satisfied, i.e., $C_{a, b}(\varphi(a), \varphi(b)) = 1$. \\
\end{definition}

\begin{definition}
A projection game $(q, A, B, E, \{\pi_e\}_{e \in E})$ consists of
\begin{itemize}
  \item a simple bipartite graph $(A, B, E)$, and
  \item for each edge $e = (a, b) \in E$, a ``projection'' $\pi_e : [q] \to [q].$
\end{itemize}
The goal is to find an assignment to the vertices $\varphi : A \cup B \to [q]$ that maximizes the number of edges $e = (a, b)$ that are satisfied, i.e., $\pi_e(\varphi(a)) = \varphi(b)$.
\end{definition}

Both free games and projection games can be viewed as special cases of {\sc Max 2-CSP}. More specifically, free games are simply Max 2-CSPs on complete balanced bipartite graphs.

For projection games, one can view $\pi_e$ as a constraint $C_e: [q]^2 \to \{0, 1\}$ where $C_e(\sigma_u, \sigma_v) = 1$ if and only if $\pi_e(\sigma_u) = \sigma_v$. In other words, projection game is {\sc Max 2-CSP} on bipartite graph where an assignment to the endpoint in $A$ of an edge determines the assignment to the endpoint in $B$.

For convenience, we will define the notation of ``optimal assignment'' for {\sc Max 2-CSP} intuitively as follows. \\

\begin{definition}
For a {\sc Max 2-CSP} instance $(q, V, E, \{C_e\}_{e \in E})$, for each vertex $u \in V$, let $\sigma_u^{OPT}$ be the assignment to $u$ in an assignment to vertices that satisfies maximum number of edges, i.e., $\varphi(u) = \sigma_u^{OPT}$ is the assignment that maximizes $\sum_{(u, v) \in E} C_{(u, v)}(\varphi(u), \varphi(v))$. In short, we will sometimes refer to this as ``the optimal assignment''.
\end{definition}

Note that since projection games and free games are families of {\sc Max 2-CSP}, the above definition also carries over when we discuss them.

Lastly, we define {\sc Densest $k$-Subgraph}. \\

\begin{definition}
In the {\sc Densest $k$-Subgraph} problem, the input is a simple graph $G = (V, E)$ of $N = |V|$ vertices. The goal is to find a subgraph of size $k$ that contain maximum number of edges.
\end{definition}

\section{Summary of Results}\label{s:results}

We are finally ready to describe our results and how they relate to the previous results. We will start with the main theorem on approximating high-value dense {\sc Max 2-CSP}. \\

\begin{theorem}[Main Theorem] \label{thm:main}
  For every constant $\gamma > 0$, there exists a polynomial-time algorithm that, given a $\delta$-dense {\sc Max 2-CSP} instance of value $\lambda$, produces an assignment of value $\Omega((\delta \lambda)^{O(1/\gamma)}N^{-\gamma})$ for the instance.
\end{theorem}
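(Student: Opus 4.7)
The plan is to instantiate the greedy/choice-reduction dichotomy sketched in the introduction as a bounded-depth recursive search. At each node of the recursion I maintain, for every vertex $v$, a candidate label set $L(v) \subseteq [q]$ with the invariant $\sigma_v^{OPT} \in L(v)$; initially $L(v) = [q]$. At each node the algorithm (i) runs a greedy procedure that picks, for every vertex, the label in its current candidate set maximizing the number of satisfied constraints against the rest, and (ii) branches by selecting a vertex $u$ and trying every $\sigma \in L(u)$, recursing with $L(v)$ replaced by $L(v) \cap \{\tau : C_{(u,v)}(\sigma,\tau) = 1\}$ for each $v \in \Gamma(u)$. Capping the recursion depth at $T = O(1/\gamma)$ gives at most $(nq)^T = N^{O(1/\gamma)}$ leaves, so the total running time is polynomial for constant $\gamma$. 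The output is the best assignment encountered anywhere in the tree.

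The technical heart is a dichotomy lemma stated at each recursion node in terms of the current average candidate size $\bar{q} := \frac{1}{n}\sum_v |L(v)|$. The lemma I would prove says that either the greedy at this node already has value $\Omega((\delta\lambda)/\bar{q}^c)$ for some absolute constant $c$, or there exist a vertex $u$ and a label $\sigma \in L(u)$ such that branching on $(u,\sigma)$ shrinks $\bar{q}$ by a factor of at least $N^{\gamma/T}$. Iterating along the branch that always chooses $\sigma = \sigma_u^{OPT}$, either some intermediate greedy call already meets the target, or after $T$ levels the candidate sizes have dropped by a factor $N^{\gamma}$, at which point the leaf greedy value reaches $\Omega((\delta\lambda)^{O(1/\gamma)} N^{-\gamma})$ as claimed. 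The algorithm does not need to know $\sigma_u^{OPT}$, since it enumerates every label in $L(u)$ at each branching step.

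To prove the dichotomy, I would lower bound the greedy value by a sum of the form $\sum_{(u,v) \in E} \Pr_{\sigma \in L(u), \tau \in L(v)}[C_{(u,v)}(\sigma,\tau) = 1]$ (greedy does at least as well as a random label from each $L(v)$) and compare with the $\delta\lambda n^2$ edges satisfied by the optimum. If the greedy falls short of the target, a counting/averaging argument forces the existence of many edges $(u,v)$ along which only a small fraction of labels in $L(v)$ are consistent with $\sigma_u^{OPT}$ under $C_{(u,v)}$; picking such a $u$ and branching on $\sigma_u^{OPT}$ then shrinks $\bar{q}$ by the required factor, and a further Markov step produces an explicit vertex to use.

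The main obstacle will be parameter balancing. The constant $c$ in the greedy guarantee, the per-level shrink rate $N^{\gamma/T}$, and the accumulated factors of $\delta\lambda$ across the $T$ levels all feed into the final bound, and hitting $(\delta\lambda)^{O(1/\gamma)} N^{-\gamma}$ exactly requires setting $T = \Theta(1/\gamma)$ and proving the shrink lemma with an explicit polynomial dependence on $\bar{q}$ and $\delta\lambda$. Some additional care is needed when $\lambda$ is small, to keep the greedy inequalities nontrivial, which is precisely why the theorem's guarantee degrades gracefully in both $\delta$ and $\lambda$.
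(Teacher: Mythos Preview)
Your proposal has the right high-level architecture---bounded-depth branching on vertex labels with a greedy fallback at every node---and this is exactly the scheme the paper uses. Two differences are worth noting. First, the paper does not work directly on the $\delta$-dense graph: it first reduces to a free game (complete bipartite) by padding with always-false constraints, trading a factor of $\delta$ in the value for uniform degrees and a clean bipartition. Your proposal skips this step; working on a general dense graph forces you to track degree variation and makes both the greedy bound and the shrink argument messier.

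The more serious issue is the form of your dichotomy. You posit that greedy at each node achieves value $\Omega((\delta\lambda)/\bar q^{\,c})$ for an \emph{absolute} constant $c$, or else the average list size $\bar q$ shrinks by $N^{\gamma/T}$. But this does not telescope to the target: with $c=1$ (the natural greedy bound) and $\bar q$ reduced by a total factor $N^{\gamma}$, the leaf greedy gives $(\delta\lambda)N^{\gamma}/q$, which for small $\gamma$ is \emph{not} at least $(\delta\lambda)^{O(1/\gamma)}N^{-\gamma}$. The paper's fix is that the exponent on the candidate-set sizes is \emph{level-dependent}: the inductive guarantee at depth $i$ carries $(1/|S_b|)^{1/i}$, and the step from $1/j$ to $1/(j+1)$ is proved by combining the greedy lower bound $\sum_{a}\sum_{b\in\Gamma^{OPT}(a)} |S_b^{a,\sigma_a^{OPT}}|/|S_b|$ with the failed-recursion inequality via two applications of H\"older's inequality, which exactly cancels the unknown $|S_b^{a,\sigma_a^{OPT}}|$. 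A Markov/averaging argument on the scalar $\bar q$ is too coarse to effect this cancellation across heterogeneous $|S_b|$ and $d_b^{OPT}$. So while you correctly identify parameter balancing as the crux, the fixed-$c$ dichotomy and the $\bar q$ potential would need to be replaced by a level-indexed inductive statement and a H\"older-type combination before the argument closes.
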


Note that, when $\delta, \lambda = N^{-o(1)}$, by choosing $\gamma < \varepsilon$, the algorithm can achieve $O((nq)^\varepsilon)$ approximation ratio for any constant $\varepsilon > 0$.

Since every free game is $1/2$-dense, Theorem~\ref{thm:main} immediately implies the following corollary. \\

\begin{corollary} \label{cor:free-game}
  For every constant $\gamma > 0$, there exists a polynomial-time algorithm that, given a free game of value $\lambda$, produces an assignment of value $\Omega(\lambda^{O(1/\gamma)}N^{-\gamma})$ for the instance.
\end{corollary}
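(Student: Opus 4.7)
The plan is to derive the corollary as an immediate specialization of Theorem~\ref{thm:main} by exhibiting a free game as a dense Max 2-CSP instance. First, I would observe that a free game $(q, A, B, \{C_{a,b}\})$ is, by its very definition, a Max 2-CSP instance whose underlying graph is the complete bipartite graph on $A \cup B$. Writing $|A| = |B| = m$ so that $n = |V| = 2m$, the instance has $|E| = m^2 = n^2/4$ edges, and hence density $\delta = |E|/|V|^2 = 1/4$, a fixed positive constant independent of $n$ and $q$.

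Next, I would feed this instance into the algorithm of Theorem~\ref{thm:main} using the same parameter $\gamma$. Since the value of the free game (viewed as a Max 2-CSP) is exactly $\lambda$, the theorem produces an assignment of value
\[
\Omega\bigl((\delta\lambda)^{O(1/\gamma)} N^{-\gamma}\bigr) \;=\; \Omega\bigl((\lambda/4)^{O(1/\gamma)} N^{-\gamma}\bigr).
\]
Because $\gamma$ is a constant chosen in advance, the exponent $O(1/\gamma)$ is a fixed number, so the numerical factor $(1/4)^{O(1/\gamma)}$ is itself an absolute constant that can be absorbed into the $\Omega(\cdot)$. This collapses to $\Omega(\lambda^{O(1/\gamma)} N^{-\gamma})$, the bound advertised in the statement.

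I do not expect any real obstacle: all the substantive work has already been done inside Theorem~\ref{thm:main}. The only thing to be careful about is the bookkeeping of constants, namely that $\gamma$ is treated as an absolute constant fixed before the algorithm is invoked, so that any $\gamma$-dependent multiplicative factor coming from the density $1/4$ (or indeed any $\Omega(1)$ lower bound on $\delta$) can be safely absorbed into the big-Omega without affecting the claimed scaling in $\lambda$ and $N$. Polynomial running time and the dependence on $\gamma$ transfer from Theorem~\ref{thm:main} without change.
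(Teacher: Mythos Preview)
Your proposal is correct and follows exactly the same approach as the paper, which simply remarks that every free game has constant density and so the corollary is immediate from Theorem~\ref{thm:main}. (The paper states the density as $1/2$ rather than your $1/4$, but this is immaterial since any positive constant for $\delta$ gets absorbed into the $\Omega(\cdot)$ in the same way you describe.)
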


Again, note that when $\lambda = N^{-o(1)}$, the algorithm can achieve $O((nq)^\varepsilon)$ approximation ratio for any constant $\varepsilon > 0$.

The next result is a similar algorithm for projection games on sufficiently dense random graphs as stated below. \\

\begin{theorem} \label{thm:dense-random-proj}
  For every constant $\gamma > 0$, there exists a polynomial-time algorithm that, given a satisfiable projection game on a random bipartite graph $(A, B, E) \sim \mathcal{G}(n/2, n/2, p)$ for any $p \geq 10\sqrt{\log n / n}$, produces an assignment of value $\Omega(N^{-\gamma})$ for the instance with probability $1 - o(1)$.
\end{theorem}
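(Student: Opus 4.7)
The plan is to reduce the projection game on a random bipartite graph to a dense Max 2-CSP on the $A$-side alone, and then invoke Theorem~\ref{thm:main} as a black box. For each pair $(a, a') \in A \times A$, introduce a ``consistency'' constraint $C_{(a,a')}(\sigma, \sigma') = 1$ iff $\pi_{(a,b)}(\sigma) = \pi_{(a',b)}(\sigma')$ for every $b \in \Gamma(a) \cap \Gamma(a')$. Since the original game is satisfiable, the optimal assignment $\sigma^{OPT}$ restricted to $A$ satisfies \emph{every} such constraint (both sides project to $\sigma_b^{OPT}$ for every shared $b$), so this derived Max 2-CSP instance on $n/2$ vertices has value $\lambda' = 1$.

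The next step is to verify that this derived instance is dense. A standard Chernoff plus union bound argument shows that with probability $1 - o(1)$, every pair $a \neq a' \in A$ satisfies $|\Gamma(a) \cap \Gamma(a')| \geq p^2 n / 4$ (using $p^2 n \geq 100 \log n$), and every vertex of $B$ has degree at most $2pn$. In particular the ``consistency-constraint'' graph on $A$ is complete, so it has density $\delta' = 1/2$. Apply Theorem~\ref{thm:main} with parameter $\gamma$ to obtain, in polynomial time, an assignment $\sigma^{ALG}: A \to [q]$ satisfying a fraction $\Omega(N^{-\gamma})$ of the derived consistency constraints; call the set of satisfied pairs $P$, so $|P| = \Omega(N^{-\gamma} n^2)$.

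To recover an assignment to $B$, use greedy voting: for each $b \in B$, set $\sigma^{ALG}_b$ to be the most popular value of $\pi_{(a,b)}(\sigma^{ALG}_a)$ over $a \in \Gamma(b)$, and bound the number of satisfied original edges by $\sum_b \max_\tau \mathrm{vote}_b(\tau)$ where $\mathrm{vote}_b(\tau) = |\{a \in \Gamma(b) : \pi_{(a,b)}(\sigma^{ALG}_a) = \tau\}|$. For each satisfied pair $(a,a') \in P$ and each $b \in \Gamma(a) \cap \Gamma(a')$, the vertices $a$ and $a'$ both vote for the same label at $b$. Summing and using $|\Gamma(a) \cap \Gamma(a')| \geq p^2 n/4$, this gives
\begin{equation*}
  \sum_b \sum_\tau \mathrm{vote}_b(\tau)^2 \;\geq\; |P| \cdot \Omega(p^2 n) \;=\; \Omega(N^{-\gamma} p^2 n^3).
\end{equation*}
Since $\max_\tau \mathrm{vote}_b(\tau) \geq \left(\sum_\tau \mathrm{vote}_b(\tau)^2\right) / |\Gamma(b)|$ and $|\Gamma(b)| \leq 2pn$, the total number of satisfied edges is $\Omega(N^{-\gamma} p n^2)$, which is an $\Omega(N^{-\gamma})$ fraction of the $\Theta(p n^2)$ edges of the projection game.

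The main obstacle I expect is the last step, the quantitative translation from ``$A$-side consistency'' to edges actually satisfied by the greedy extension; the crux is to pair the Cauchy--Schwarz bound on $\max_\tau \mathrm{vote}_b(\tau)$ with the uniform upper bound on $|\Gamma(b)|$ and the uniform lower bound on $|\Gamma(a) \cap \Gamma(a')|$ that the random graph supplies. The other routine-but-necessary piece is confirming that the probability $1-o(1)$ event (all shared neighborhoods of size $\Omega(p^2 n)$ and all degrees $O(pn)$) holds simultaneously under our hypothesis $p \geq 10\sqrt{\log n/n}$; this follows from a standard Chernoff plus union bound on the $\binom{n/2}{2}$ pairs and $n$ vertices.
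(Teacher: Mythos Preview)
Your proposal is correct and follows essentially the same approach as the paper. The paper's proof (via Lemma~\ref{lem:reduction-dense}) ``squares'' the projection game into consistency constraints on $A$, applies the dense {\sc Max 2-CSP}/free-game algorithm, and then recovers the $B$-assignment by greedy voting together with the same $\max_\tau \mathrm{vote}_b(\tau) \ge \bigl(\sum_\tau \mathrm{vote}_b(\tau)^2\bigr)/d_b$ bound and the same Chernoff-based uniform control on degrees and pairwise common neighborhoods. The only cosmetic difference is that the paper first bipartitions $A$ into $A_1, A_2$ to obtain a genuine free game and invokes Corollary~\ref{cor:free-game}, whereas you keep all of $A$, obtain a complete-graph {\sc Max 2-CSP}, and invoke Theorem~\ref{thm:main} directly; both routes give the same $\Omega(N^{-\gamma})$ guarantee.
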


Note that $\mathcal{G}(n/2, n/2, p)$ is defined in Erd\H{o}s-R\'{e}nyi fashion, i.e., the graph contains $n/2$ vertices on each side and, each pair of left and right vertices is included as an edge with probability $p$ independently.

In addition, it is worth noting here that the required density for projection games is much lower than that of {\sc Max 2-CSP}; our {\sc Max 2-CSP} algorithm requires the degree to be $\Omega(n/N^{-o(1)})$ whereas the projection games algorithm requires only $\tilde{\Omega}(\sqrt{n})$.

As stated earlier, we are unaware of any non-trivial polynomial-time algorithm for dense {\sc Max 2-CSP}, free games, or projection games on dense random graphs prior to our algorithm.

Next, we state our analogous result for {\sc Densest $k$-Subgraph}. \\

\begin{corollary} \label{cor:dks}
  For every constant $\gamma > 0$, there exists a polynomial-time algorithm that, given a graph $G = (V, E)$ on $N$ vertices such that its densest subgraph with $k$ vertices is $\delta$-dense, produces a subgraph of $k$ vertices that is $\Omega(\delta^{O(1/\gamma)}N^{-\gamma})$-dense with high probability.
\end{corollary}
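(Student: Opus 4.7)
The plan is to reduce {\sc Densest $k$-Subgraph} to dense {\sc Max 2-CSP} and then apply Theorem~\ref{thm:main}. Given $G = (V, E)$ on $N$ vertices whose optimum $k$-subgraph $S^*$ has $\delta k^2$ edges, I would first randomly partition $V$ into $k$ parts $V_1, \ldots, V_k$ by assigning each vertex an independent uniform index in $[k]$ (padding with dummy isolated vertices so every part has the same size $q = \lceil N/k \rceil$). This induces a {\sc Max 2-CSP} instance with $n = k$ variables (one per part), alphabet $[q]$ (identified with the vertices in the corresponding part via a fixed bijection), complete constraint graph on $[k]$ (so density is $1$), and constraint $C_{\{i,j\}}(a, b) = 1$ iff the vertices $u \in V_i, v \in V_j$ corresponding to $a, b$ satisfy $\{u, v\} \in E(G)$. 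Importantly, $N_{\mathrm{CSP}} = n \cdot q = O(N)$, and any CSP assignment $\varphi : [k] \to [q]$ naturally selects exactly $k$ distinct vertices from $G$, one per part.

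The core claim I would prove is that with high probability over the random partition, this {\sc Max 2-CSP} has value $\Omega(\delta)$. Consider the randomized ``fix'' strategy that, in each part $V_i$ containing at least one vertex of $S^*$, independently selects a uniformly random such vertex as the assignment. For an edge $\{u, v\}$ of $S^*$, write $X_i$ and $X_j$ for the number of $S^*$-vertices landing in the parts $V_i, V_j$ that contain $u, v$; the edge is satisfied precisely when $i \neq j$ and both endpoints are chosen. A balls-in-bins calculation gives $X_i \mid (u \in V_i) \sim 1 + \mathrm{Bin}(k-1, 1/k)$, so $\mathbb{E}[1/X_i \mid u \in V_i] = \Omega(1)$ (tending to $1 - 1/e$ as $k \to \infty$), and a similar bound holds jointly for $\mathbb{E}[1/(X_i X_j) \mid u \in V_i, v \in V_j, i \neq j]$ by the near-independence of the two counts. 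Summing over the $\delta k^2$ edges of $S^*$ gives an expected number of satisfied constraints $\Omega(\delta k^2)$, hence expected CSP value $\Omega(\delta)$. A standard averaging / reverse-Markov argument, or more simply repeating the random partition $\Theta(\log N)$ times and keeping the best trial, amplifies this to CSP value $\Omega(\delta)$ with high probability.

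With such a partition fixed, I would apply Theorem~\ref{thm:main} with parameter $\gamma$ to the induced CSP: density $1$, value $\Omega(\delta)$, and size $O(N)$ yield, in polynomial time, an assignment of value $\Omega((1 \cdot \delta)^{O(1/\gamma)} N^{-\gamma}) = \Omega(\delta^{O(1/\gamma)} N^{-\gamma})$. Returning the $k$ vertices $\{\varphi(i)\}_{i \in [k]}$ gives a $k$-subgraph (distinct by construction, modulo discarding dummy vertices and replacing by arbitrary real vertices, which can only increase edge count), whose edge count equals the number of satisfied CSP constraints; dividing by $k^2$ yields density $\Omega(\delta^{O(1/\gamma)} N^{-\gamma})$ as claimed. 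The main obstacle is the second paragraph: carefully justifying the $\Omega(\delta)$ CSP-value bound and boosting it to ``with high probability''. The expectation calculation is essentially a balls-in-bins exercise, but controlling the lower tail requires either a variance / McDiarmid-type concentration argument on the fix-strategy value, or the pragmatic fallback of running the reduction $\Theta(\log N)$ independent times and outputting the densest $k$-subgraph encountered.
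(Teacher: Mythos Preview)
Your proposal is correct and follows essentially the same route as the paper: reduce {\sc Densest $k$-Subgraph} to a {\sc Max 2-CSP} instance on $k$ variables with alphabet size $O(N/k)$ via random partitioning, argue that the resulting CSP has value $\Omega(\delta)$ with constant probability, and then invoke Theorem~\ref{thm:main}. The paper does exactly this, but imports the reduction as a black box (Lemma~\ref{lem:dks-reduction}, due to Charikar et al.~\cite{CHK}), whereas you spell it out and analyze it from scratch; your random-partition construction \emph{is} precisely the CHK reduction.

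Two minor remarks on your write-up. First, your reverse-Markov step implicitly needs the observation that the CSP optimum is at most the {\sc D$k$S} optimum (since any assignment picks $k$ distinct vertices of $G$), so the CSP value is always $\leq 2\delta$; this is what turns ``expected value $\Omega(\delta)$'' into ``value $\Omega(\delta)$ with constant probability'' and makes $\Theta(\log N)$ repetitions suffice. Second, the paper's version of the reduction records density $\Omega(\delta)$ rather than $\Theta(1)$ for the constraint graph; your choice to keep the constraint graph complete is harmless (and arguably cleaner), since Theorem~\ref{thm:main} then gets density $\Theta(1)$ and value $\Omega(\delta)$, yielding the same $\Omega(\delta^{O(1/\gamma)}N^{-\gamma})$ bound.
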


Note that the density condition is on the optimal solution, not the given graph $G$. The condition and the algorithm are exactly the same as that of~\cite{FPK01} and~\cite{ST05}. However, the techniques are substantially different. While~\cite{FPK01} deals combinatorially directly with the given graph $G$ and~\cite{ST05} employs subsampling technique, we simply use our algorithm from Theorem~\ref{thm:main} together with a simple reduction from {\sc Densest $k$-Subgraph} to {\sc Max 2-CSP} due to Charikar et al.~\cite{CHK}.

Lastly, we also give a quasi-polynomial time approximation scheme for satisfiable dense {\sc Max 2-CSP} as described formally below. \\

\begin{corollary}[QPTAS for Dense {Max 2-CSP}] \label{cor:qptas-dense}
  For any $1 \geq \varepsilon > 0$, there exists an $(1 + \varepsilon)$-approximation algorithm for satisfiable $\delta$-dense {\sc Max 2-CSP} that runs in time $N^{O(\varepsilon^{-1}\delta^{-1}\log N)}$.
\end{corollary}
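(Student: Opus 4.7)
The plan is to adapt the exhaustive-sampling PTAS of Arora, Karger, and Karpinski to the large-alphabet regime, using the satisfiability assumption to enable a sharper concentration bound that controls the sample size by $\varepsilon^{-1}$ rather than $\varepsilon^{-2}$.

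\textbf{Algorithm.} Sample a seed set $S \subseteq V$ uniformly at random with $|S| = s := \Theta(\varepsilon^{-1}\delta^{-1}\log N)$. Enumerate all $q^s$ candidate assignments $\psi : S \to [q]$; since $q \le N$, this yields at most $N^{O(\varepsilon^{-1}\delta^{-1}\log N)}$ branches. For each $\psi$, construct an extension $\varphi_\psi : V \to [q]$ by $\varphi_\psi(u) := \psi(u)$ for $u \in S$ and, for each $v \notin S$,
\[
  \varphi_\psi(v) := \arg\max_{a \in [q]}\; \bigl|\{u \in S \cap \Gamma(v) : C_{(u,v)}(\psi(u),a) = 1\}\bigr|.
\]
Output the $\varphi_\psi$ with the largest value; the total running time is $q^s \cdot \mathrm{poly}(N) = N^{O(\varepsilon^{-1}\delta^{-1}\log N)}$.

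\textbf{Analysis.} Focus on the correct guess $\psi^\star$ given by $\psi^\star(u) := \sigma_u^{OPT}$ for $u \in S$, which is one of the enumerated assignments. Call a label $a$ \emph{$\varepsilon$-bad at $v$} if at least $\varepsilon\deg(v)$ of $v$'s edges are violated when $v \mapsto a$ while the neighbors follow $\sigma^{OPT}$. Because OPT satisfies every edge, $\sigma_v^{OPT}$ makes every $S$-edge at $v$ satisfied, so the argmax $\varphi_{\psi^\star}(v)$ also satisfies every $S$-edge at $v$. For any $v$ with $\deg(v) \ge \delta n$ and any $\varepsilon$-bad $a$, let $D_v(a)$ denote the neighbors at which $a$ disagrees with OPT, so $|D_v(a)| \ge \varepsilon\deg(v) \ge \varepsilon\delta n$; then
\[
  \Pr_S\bigl[\,a \text{ satisfies every } S\text{-edge at } v\,\bigr]
  = \Pr_S\bigl[\,S \cap D_v(a) = \emptyset\,\bigr]
  \le (1 - |D_v(a)|/n)^s
  \le e^{-\varepsilon\delta s}
  \le N^{-3}.
\]
A union bound over the at most $N^2$ pairs $(v,a)$ ensures, with high probability, that no vertex of degree $\ge \delta n$ receives an $\varepsilon$-bad label. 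The smaller-degree vertices are handled by a standard degree-doubling partition (applying the same bound to each degree class with an appropriately rescaled sample fraction), or simply by noting that vertices with $\deg(v) < \varepsilon\delta n$ collectively touch only an $O(\varepsilon)$-fraction of the edges and may be assigned arbitrarily.

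\textbf{Main obstacle.} The pointwise ``not $\varepsilon$-bad'' guarantee does not immediately imply a global $(1-\varepsilon)$-fraction satisfaction for a general Max 2-CSP: an edge $(u,v)$ with $u,v \notin S$ can be left unsatisfied when $\varphi_{\psi^\star}(u)$ and $\varphi_{\psi^\star}(v)$ each deviate from $\sigma^{OPT}$ in individually $\varepsilon$-good but mutually incompatible ways. The crux of the proof will be to apply the hypergeometric miss-bound to \emph{every} label distinct from $\sigma_v^{OPT}$ (not only the $\varepsilon$-bad ones), invoking the paper's ``choice-reduction'' principle to show that the greedy argmax becomes the singleton $\{\sigma_v^{OPT}\}$ at all but a degree-weighted $O(\varepsilon)$-fraction of vertices. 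Once $\varphi_{\psi^\star} = \sigma^{OPT}$ outside this small exceptional set, satisfiability immediately yields $\mathrm{val}(\varphi_{\psi^\star}) \ge (1-O(\varepsilon))|E|$, and hence the desired $(1+\varepsilon)$-approximation after rescaling $\varepsilon$ by a constant.
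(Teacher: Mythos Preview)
Your approach differs completely from the paper's, and the ``main obstacle'' you identify is fatal to it: the proposed resolution does not work.

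The hypergeometric bound $\Pr[S\cap D_v(a)=\emptyset]\le e^{-|D_v(a)|\,s/n}$ is useful only when $|D_v(a)|=\Omega(\varepsilon\delta n)$; it says nothing about labels $a$ with $D_v(a)$ small or empty, so it cannot force the argmax down to the singleton $\{\sigma_v^{OPT}\}$. Such labels are unavoidable in a general satisfiable {\sc Max 2-CSP}. Take the complete graph on $n$ vertices, alphabet $\{0,1\}$, and the NAND constraint $C_e(\sigma_u,\sigma_v)=1$ iff not both inputs are $1$. The all-$0$ assignment is satisfying, and for \emph{every} $v$ and \emph{every} $a\in\{0,1\}$ we have $C_e(0,a)=1$, hence $D_v(a)=\emptyset$. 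Under the correct guess $\psi^\star\equiv 0$, both labels satisfy all $S$-edges at every $v\notin S$, the argmax is a tie everywhere, and an adversarial (or merely unlucky) tie-break that sets $\varphi_{\psi^\star}(v)=1$ for all $v\notin S$ violates $\binom{n-s}{2}$ edges and drives the value to $o(1)$. No enlargement of $s$ helps, so the $\varepsilon^{-1}$-versus-$\varepsilon^{-2}$ distinction never even comes into play. The paper's ``choice-reduction principle'' is not a statement about one-shot greedy extension from a random sample; it is an iterative alphabet-shrinking scheme inside the paper's own recursive algorithm, and invoking it does not rescue the sampling plan.

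For contrast, the paper's proof is orthogonal to sampling. It pads the $\delta$-dense satisfiable instance to a complete graph with always-true dummy constraints (trading $\varepsilon$ for $\varepsilon\delta$) and then applies Lemma~\ref{lem:approx-complete-game}: a recursive $(nq)^{O(i)}$-time procedure that, given a satisfiable complete-graph instance with reduced alphabets $\{S_u\}$ containing the optimum, outputs an assignment of value at least $\bigl(\prod_{u}|S_u|^{-1}\bigr)^{1/(ni)}$. Taking $i=\Theta(\varepsilon^{-1}\log q)$ makes this at least $q^{-1/i}\ge(1+\varepsilon)^{-1}$ and gives the stated running time. The recursion is exactly the greedy/choice-reduction trade-off: for each pair $(u,\sigma_u)$, either the level-$j$ call on the reduced alphabets $\{S_v^{u,\sigma_u}\}$ already meets the target $R$, or $\sigma_u$ can be safely deleted from $S_u$; once no further deletion is possible, an AM--GM computation shows that a random assignment from the surviving $S_u$'s already has value at least $R$. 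Crucially, this argument never needs any argmax to collapse to a singleton --- it only tracks the product $\prod_u|S_u|$ --- which is precisely why it evades the obstacle that breaks your plan.
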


Comparing to the known algorithms, our QPTAS runs faster than QPTASs from~\cite{Alon:2003:RSA:963875.963877, AKK95, BMHS11}, each of which takes at least $N^{O(\varepsilon^{-2}\delta^{-1}\log N)}$ time. However, while our algorithm works only for satisfiable instances, the mentioned algorithms work for unsatisfiable instances as well but with an additive error of $\varepsilon$ in value instead of the usual multiplicative guarantee of $(1 + \varepsilon)$.

\section{Proof of The Main Theorem}

In this section, we prove the main theorem. In order to do so, we will first show that we do not have to worry about the density $\delta$ at all, i.e., it is enough for us to prove the following lemma. \\

\begin{lemma} \label{lem:main}
  For every $\gamma > 0$, there exists a polynomial-time algorithm that, given a free game $(q, A, B, \{C_{(a, b)}\}_{(a, b) \in A \times B})$ of value $\lambda'$, produces an assignment of value $\lambda'^{O(1/\gamma)}q^{-\gamma}$ for the instance.
\end{lemma}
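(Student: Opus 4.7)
The plan is a recursive branching algorithm of depth $\ell := \lceil 2/\gamma \rceil$ and branching factor $mq$ per level (writing $m := |A| = |B|$), for a total of $(mq)^\ell = N^{O(1/\gamma)}$ leaves --- polynomial for constant $\gamma$, with $\mathrm{poly}(m,q)$ work per leaf. At each node I would maintain a partial pinning $\psi: T \to [q]$ for some $T \subseteq A$, along with the induced label sets $L_b(\psi) := \bigcap_{a \in T}\{\tau \in [q] : C_{a,b}(\psi(a),\tau) = 1\}$ for every $b \in B$. At an internal node I branch over all $(a, \sigma) \in (A \setminus T) \times [q]$, extending $\psi$ by $\psi(a) := \sigma$ and updating the $L_b$'s. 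At each leaf I produce a completion by sampling $\sigma_b$ uniformly from $L_b(\psi)$ (or picking arbitrarily if empty) for each $b \in B$, and then greedily assigning each unpinned $a$ the label in $[q]$ that maximizes $\sum_b C_{a,b}(\sigma_a, \sigma_b)$. The algorithm returns the best assignment across all leaves.

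For the analysis I would concentrate on the ``correct'' branch obtained by drawing $a_1,\ldots,a_\ell$ uniformly at random from $A$ and setting $\psi(a_i) := \sigma_{a_i}^{OPT}$; since the algorithm enumerates all $(mq)^\ell$ sequences, it explores this leaf. Letting $B^\star := \{b \in B : (a_i,b) \text{ is satisfied by OPT for every } i\}$, we have $\sigma_b^{OPT} \in L_b(\psi)$ for every $b \in B^\star$. A Jensen-type averaging over the random $a_i$'s gives $\mathbb{E}[|B^\star|] \geq m \lambda'^\ell$ and $\mathbb{E}\bigl[\sum_{b \in B^\star} s_b\bigr] \geq \lambda'^{\ell+1} m^2$, where $s_b$ is the OPT-satisfied degree of $b$. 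Provided the label-set potential $\sum_b |L_b(\psi)|$ has been driven below $q^{\gamma} m$ at the leaf, Markov places a constant fraction of $B^\star$ in the set of vertices with $|L_b(\psi)| \leq O(q^\gamma / \lambda'^\ell)$, so the uniform-sampling step hits $\sigma_b^{OPT}$ with probability $\Omega(\lambda'^\ell q^{-\gamma})$ for each such $b$. The subsequent greedy completion of $A \setminus T$ then inherits expected value $\Omega(\lambda'^{O(1/\gamma)} q^{-\gamma})$ in fractional terms --- the target bound.

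The crux is therefore a trade-off lemma: at some depth $\leq \ell$ on the correct branch, either the potential $\sum_b |L_b|$ has already dropped to $\leq q^\gamma m$, or a greedy-star assignment rooted at the currently guessed vertex is already of value $\Omega(\lambda'^{O(1/\gamma)} q^{-\gamma})$. The intuition is that when a guess $(a,\sigma_a^{OPT})$ fails to shrink $\sum_b |L_b|$ by the targeted factor $q^\gamma$, the constraints $\{C_{a,b}\}_b$ must be so ``loose'' that many $b$'s retain at least a $q^{-\gamma}$-fraction of their compatible labels --- a looseness that, via Cauchy--Schwarz or a rearrangement argument, translates into high satisfiability of the greedy star rooted at $a$ itself (this is precisely the ``greedy vs.\ choice-reduction'' dichotomy advertised in the introduction). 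Formalizing this trade-off cleanly, and in particular bookkeeping the multiplicative losses in $\lambda'$ across $\ell$ levels so that the final $\lambda'$-dependence stays $\lambda'^{O(1/\gamma)}$, will be the main technical obstacle.
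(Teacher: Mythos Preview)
Your high-level framework---branch over $(a,\sigma_a)$ pairs to shrink label sets $L_b$, and fall back to a greedy assignment when shrinking stalls---is exactly the paper's approach. The gap is in the trade-off lemma you flag as ``the main technical obstacle'': with your choice of potential it is not merely unproved but false. You argue that if pinning $(a,\sigma_a^{OPT})$ fails to shrink $\sum_b |L_b|$ by a factor $q^\gamma$, then ``many $b$'s retain at least a $q^{-\gamma}$-fraction of their compatible labels,'' so greedy succeeds. But $\sum_b |L_b^{\mathrm{new}}| \ge q^{-\gamma}\sum_b |L_b^{\mathrm{old}}|$ does \emph{not} imply $\sum_b |L_b^{\mathrm{new}}|/|L_b^{\mathrm{old}}|$ is large, which is what the greedy bound actually needs: the sum $\sum_b |L_b|$ can be dominated by a few $b$'s with enormous $|L_b^{\mathrm{old}}|$ that barely shrink, while every other $L_b$ collapses to empty. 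Your Markov step has the dual problem: even when $\sum_b |L_b| \le q^\gamma m$, nothing prevents $B^\star$ from coinciding with the set of large-$|L_b|$ vertices, since all of the drop in the potential can come from $b\notin B^\star$ whose $L_b$ were emptied (recall $\sigma_b^{OPT}\notin L_b$ for such $b$, so emptying them costs nothing but contributes fully to the potential drop).

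The paper sidesteps both issues by tracking, at recursion depth $i$, the potential
\[
n'\sum_{b\in B}\Bigl(\tfrac{d_b^{OPT}}{n'}\Bigr)^{(i+1)/2}\Bigl(\tfrac{1}{|S_b|}\Bigr)^{1/i}\,1_{\sigma_b^{OPT}\in S_b},
\]
which (i) restricts to $b$'s that still contain $\sigma_b^{OPT}$, so emptied sets cannot fake progress; (ii) weights by the OPT-degree $d_b^{OPT}$, so $b$'s irrelevant to OPT cannot absorb the drop; and (iii) uses $|S_b|^{-1/i}$ rather than $|S_b|$, so the exponent telescopes correctly across levels. The trade-off between the greedy value $\sum_{a}\sum_{b\in\Gamma^{OPT}(a)} |S_b^{a,\sigma_a^{OPT}}|/|S_b|$ and the level-$j$ recursive guarantee is then established by two applications of H\"older's inequality---this is the technical heart that your sketch is missing, and it is not recoverable from the additive potential $\sum_b |L_b|$.
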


The proof of the main theorem based on the lemma above is shown below.

\begin{proof}[Proof of Theorem~\ref{thm:main} based on Lemma~\ref{lem:main}]
  The proof is based on putting in ``dummy edges'' where the constraints are always false regardless of the assignment to make the game more dense. More specifically, given a {\sc Max 2-CSP} instance $(q, V, E, \{C_e\}_{e \in E})$ of value $\lambda$ and density $\delta$, we construct a free game $(q', A, B, \{C'_{(a, b)}\}_{(a, b) \in A \times B})$ as follows:
  \begin{itemize}
    \item Let $A, B$ be copies of $V$ and let $q' = q$.
    \item For each $a \in A$ and $b \in B$, let $C'_{(a, b)} = C_{(a, b)}$ if $(a, b) \in E$. Otherwise, let $C'_{(a, b)} := 0$.
  \end{itemize}

  It is not hard to see that, if we assign the optimal assignment of the original instance to the free game, then $\delta \lambda n^2$ edges are satisfied where $n = |V|$. In other words, the value of the free game is at least $\delta \lambda$. Thus, from Lemma~\ref{lem:main}, for any constant $\gamma$, we can find an assignment $\varphi': A \cup B \to [q']$ of value at least $(\delta \lambda)^{O(1/\gamma)}q^{-\gamma}$ for the free game.

  We create an assignment $\varphi: V \to [q]$ based on $\varphi'$ as follows. For each vertex $v \in V$, let $a_v \in A$ and $b_v \in B$ be the vertices corresponding to $v$ in the free game. Set $\varphi(v)$ to be either $\varphi'(a_v)$ or $\varphi'(b_v)$ with equal probability.

  From the above construction, the expected number of edges satisfied by $\varphi$ in the {\sc Max 2-CSP} instance is
  \begin{align*}
  	\mathbb{E}\left[\sum_{(u, v) \in E} C_{(u, v)}(\varphi(u), \varphi(v))\right] &= \sum_{(u, v) \in E} \mathbb{E}\left[C_{(u, v)}(\varphi(u), \varphi(v))\right] \\
  	(\text{From our choice of } \varphi)&= \sum_{(u, v) \in E} \frac{1}{4} \left(\sum_{\sigma_u \in \{\varphi'(a_u), \varphi'(b_u)\}} \sum_{\sigma_v \in \{\varphi'(a_v), \varphi'(b_v)\}} C_{(u, v)}(\sigma_u, \sigma_v)\right) \\
  	&\geq \sum_{(u, v) \in E} \frac{1}{4} \left(C_{(u, v)}(\varphi'(a_u), \varphi'(b_v)) + C_{(u, v)}(\varphi'(b_u), \varphi'(a_v))\right) \\
  	&= \frac{1}{4} \sum_{(u, v) \in E} \left(C_{(u, v)}(\varphi'(a_u), \varphi'(b_v)) + C_{(u, v)}(\varphi'(b_u), \varphi'(a_v))\right) \\
  	(\text{From definition of } C') &= \frac{1}{4} \sum_{(a, b) \in A \times B} C'_{(a, b)}(\varphi'(a), \varphi'(b)). \\
  \end{align*}

  Observe that $$\sum_{(a, b) \in A \times B} C'_{(a, b)}(\varphi'(a), \varphi'(b))$$ is the value of $\varphi'$ with respect to the free game, which is at least $(\delta \lambda)^{O(1/\gamma)}q^{-\gamma}$. As a result, we can conclude that $\varphi$ is of expected value at least $\frac{1}{4}(\delta \lambda)^{O(1/\gamma)}q^{-\gamma} = \Omega((\delta \lambda)^{O(1/\gamma)}N^{-\gamma})$ with respect to the instance $(q, V, E, \{C_e\}_{e \in E})$.

  Lastly, we note that while the algorithm above is non-deterministic, the standard derandomization technique via conditional probability can be employed to make the algorithm deterministic without affecting the guarantee on the value of $\varphi$, which completes our proof for the main theorem.
\end{proof}

Now, we finally give the proof for Lemma~\ref{lem:main}. As mentioned in the introduction, the main idea of the proof is a trade-off between the greedy algorithm and the choice reduction algorithm. In other words, either the greedy assignment has high value, or we can reduce the number of candidates of the optimal assignment for many variables significantly by assigning only one variable. This argument needs to be applied multiple times to arrive at the result; the more variables we iterate on, the better guarantee we get on the output assignment value.

For the purpose of analysis, we will define our algorithm recursively and use induction to show that the output assignment meets the desired criteria.

\begin{proof}[Proof of Lemma~\ref{lem:main}]
First, let us define notation that we will use throughout the proof. For a free game $(q, A, B, \{C_{(a, b)}\}_{(a, b) \in A \times B})$, define $E^{OPT}$ to be the set of edges satisfied by $\{\sigma^{OPT}_u\}_{u \in V}$. In other words, $E^{OPT} = \{(u, v) \in E \mid C_{(u, v)}(\sigma^{OPT}_u, \sigma^{OPT}_v) = 1\}$. We also define $\Gamma^{OPT}(u)$ to be the neighborhood of $u$ with respect to $(V, E^{OPT})$ and let $d^{OPT}_u$ be the degree of $u$ in $(V, E^{OPT})$, i.e., $d^{OPT}_u = |\Gamma^{OPT}(u)|$. In addition, let $n' = n/2$ be the size of $A$ and $B$.

We will prove the lemma by induction. Let $P(i)$ represent the following statement: there exists an $O\left((nq)^{2i}\right)$-time algorithm {\sc Approx-FreeGame$_i$}($q, A, B,$ $\{C_{(a, b)}\}_{(a, b) \in A \times B}, \{S_b\}_{b \in B}$) that takes in a free game instance $(q, A, B, \{C_{(a, b)}\}_{(a, b) \in A \times B})$ of value $\lambda'$ and a reduced alphabet set $S_b$ for every $b \in B$, and produces an assignment that satisfies at least $$n'\left(\sum_{b \in B} \left(\frac{d^{OPT}_b}{n'}\right)^\frac{i+1}{2}\left(\frac{1}{|S_b|}\right)^\frac{1}{i}1_{\sigma_b^{OPT} \in S_b}\right)$$ edges. Note here that $1_{\sigma_b^{OPT} \in S_b}$ denotes an indicator variable for whether $\sigma_b^{OPT} \in S_b$. Moreover, for convenience, we use the expression $\left(1/|S_b|\right)^\frac{1}{i}1_{\sigma_b^{OPT} \in S_b}$ to be represent zero when $S_b = \emptyset$.

Before we proceed to the induction, let us note why $P(i)$ implies the lemma. By setting $i = \lceil 1/\gamma \rceil$ and $S_b = [q]$ for every $b \in B$, since $\sigma_b^{OPT} \in S_b$ for every $b \in B$, the number of edges satisfied by the output assignment of the algorithm in $P(i)$ is at least
\begin{align*}
  n' \sum_{b \in B} \left(\frac{d^{OPT}_b}{n'}\right)^\frac{i+1}{2}\left(\frac{1}{q}\right)^\frac{1}{i} &= n' \frac{1}{q^{1/i}} \left(\sum_{b \in B} \left(\frac{d^{OPT}_b}{n'}\right)^\frac{i+1}{2}\right) \\
  (\text{From H\"{o}lder's inequality}) &\geq \frac{(n')^2}{q^{1/i}} \left(\frac{1}{n'}\sum_{b \in B} \frac{d^{OPT}_b}{n'}\right)^\frac{i+1}{2} \\
  &= \frac{(n')^2}{q^{1/i}} \left(\frac{|E^{OPT}|}{(n')^2}\right)^\frac{i+1}{2} \\
  (\text{Since } |E^{OPT}|/(n')^2 \text{ is the value of the instance}) &= \frac{(n')^2}{q^{1/i}} \left(\lambda'\right)^\frac{i+1}{2} \\
  (\text{From our choice of } i) &\geq (n')^2 \frac{\lambda'^{O(1/\gamma)}}{q^{\gamma}},
\end{align*}
which is the statement of the lemma.

Now, we finally show that $P(i)$ is true for every $i \in \mathbb{N}$ by induction.

{\em Base Case.} The algorithm {\sc Approx-FreeGame$_1$}($q, A, B, \{C_{(a, b)}\}_{(a, b) \in A \times B}, \{S_b\}_{b \in B}$) is a greedy algorithm that works as follows:
\begin{enumerate}
  \item For each $a \in A$, assign $\sigma^*_a \in S_a$ that maximizes $\sum_{b \in B} \frac{1}{|S_b|}\left(\sum_{\sigma_b \in S_b} C_{(a, b)}(\sigma_a, \sigma_b)\right)$ to it. \label{step:free-greedy}
  \item For each $b \in B$, assign $\sigma^*_b \in S_b$ that maximizes the number of edges satisfied, i.e., $\sum_{a \in A} C_{(a, b)}(\sigma^*_a, \sigma_b)$, to it.
\end{enumerate}

It is obvious that the algorithm runs in $O(n^2q^2)$ time as desired.

Next, we need to show that the algorithm gives an assignment that satisfies at least $$n'\left(\sum_{b \in B} \left(\frac{d^{OPT}_b}{n'}\right)\left(\frac{1}{|S_b|}\right)1_{\sigma_b^{OPT} \in S_b}\right) = \sum_{b \in B} \frac{d^{OPT}_b}{|S_b|}1_{\sigma_b^{OPT} \in S_b}$$ edges.

To prove this, observe that, from our choice of $\sigma^*_b$, the number of satisfied edges by the output assignment can be bounded as follows.
\begin{align*}
  \sum_{b \in B} \sum_{a \in A} C_{(a, b)}(\sigma^*_a, \sigma^*_b) &\geq \sum_{b \in B} \frac{1}{|S_b|} \sum_{\sigma_b \in S_b} \left(\sum_{a \in A} C_{(a, b)}(\sigma^*_a, \sigma_b)\right) \\
  &= \sum_{a \in A} \sum_{b \in B} \frac{1}{|S_b|} \left(\sum_{\sigma_b \in S_b} C_{(a, b)}(\sigma^*_a, \sigma_b)\right) \\
  (\text{From our choice of } \sigma_a^*) &\geq \sum_{a \in A} \sum_{b \in B} \frac{1}{|S_b|} \left(\sum_{\sigma_b \in S_b} C_{(a, b)}(\sigma^{OPT}_a, \sigma_b)\right) \\
  &\geq \sum_{a \in A} \sum_{b \in B} \frac{1}{|S_b|} C_{(a, b)}(\sigma^{OPT}_a, \sigma^{OPT}_b) 1_{\sigma_b^{OPT} \in S_b} \\
  &= \sum_{b \in B} \sum_{a \in A} \frac{1}{|S_b|} C_{(a, b)}(\sigma^{OPT}_a, \sigma^{OPT}_b) 1_{\sigma_b^{OPT} \in S_b} \\
  (\text{From definition of } d^{OPT}_b)&= \sum_{b \in B} \frac{1}{|S_b|} d^{OPT}_b 1_{\sigma_b^{OPT} \in S_b} \\
  &= \sum_{b \in B} \frac{d^{OPT}_b}{|S_b|} 1_{\sigma_b^{OPT} \in S_b}.
\end{align*}

Thus, we can conclude that $P(1)$ is true.

{\em Inductive Step.} Let $j$ be any positive integer. Suppose that $P(j)$ holds.

We will now describe {\sc Approx-FreeGame$_{j+1}$} based on {\sc Approx-FreeGame$_j$} as follows.

\begin{enumerate}
\item For each $a \in A$ and $\sigma_a \in S_a$, do the following:
  \begin{enumerate}
  \item For each $b \in B$, compute $S^{a, \sigma_a}_b = \{\sigma_b \in S_b \mid C_{(a, b)}(\sigma_a, \sigma_b) = 1\}$.
  \item Call {\sc Approx-FreeGame$_{j}$}($q, A, B, \{C_{(a, b)}\}_{(a, b) \in A \times B}, \{S^{a, \sigma_a}_b\}_{b \in B}$). Let the output assignment be $\varphi^{a, \sigma_a}$.
  \end{enumerate}
\item Execute the following greedy algorithm:
  \begin{enumerate}
  \item For each $a \in A$, assign $\sigma^*_a \in S_a$ to it that maximizes $\sum_{b \in B} \frac{1}{|S_b|}\left(\sum_{\sigma_b \in S_b} C_{(a, b)}(\sigma_a, \sigma_b)\right)$. \label{step:free-greedy}
  \item For each $b \in B$, assign $\sigma^*_b \in S_b$ to it that maximizes the number of edges satisfied, i.e., maximizes $\sum_{a \in A} C_{(a, b)}(\sigma^*_a, \sigma_b)$.
  \end{enumerate}
\item Output an assignment among the greedy assignment and $\varphi^{a, \sigma_a}$ for every $a, \sigma_a$ that satisfies maximum number of edges.
\end{enumerate}

Since every step except the {\sc Approx-FreeGame$_{j}$}($q, A, B, \{C_{(a, b)}\}_{(a, b) \in A \times B}, \{S_b\}_{b \in B}$) calls takes $O((nq)^2)$ time and we call {\sc Approx-FreeGame$_{j}$} only at most $(nq)^2$ times, we can conclude that the running time of {\sc Approx-FreeGame$_{j+1}$} is $O((nq)^{2j+2})$ as desired.

Define $R$ to be $n'\left(\sum_{b \in B} \left(\frac{d^{OPT}_b}{n'}\right)^\frac{j+2}{2}\left(\frac{1}{|S_b|}\right)^\frac{1}{j+1}1_{\sigma_b^{OPT} \in S_b}\right)$, our target number of edges we want to satisfy. The only thing left to show is that the assignment output from the algorithm indeed satisfies at least $R$ edges. We will consider two cases.

First, if there exist $a \in A$ and $\sigma_a \in S_b$ such that the output assignment from {\sc Approx-FreeGame$_{j}$}($q, A, B, \{C_{(a, b)}\}_{(a, b) \in A \times B}, \{S^{a, \sigma_a}_b\}_{b \in B}$) satisfies at least $R$ edges, then it is obvious that the output assignment of {\sc Approx-FreeGame$_{j+1}$} indeed satisfies at least $R$ edges as well.

In the second case, for every $a \in A$ and $\sigma_a \in S_a$, the output assignment from {\sc Approx-FreeGame$_{j}$}($q, A, B, \{C_{(a, b)}\}_{(a, b) \in A \times B}, \{S^{a, \sigma_a}_b\}_{b \in B}$) satisfies less than $R$ edges. For each $a \in A$, since the output assignment from {\sc Approx-FreeGame$_{j}$}($q, A, B, \{C_{(a, b)}\}_{(a, b) \in A \times B}, \{S^{a, \sigma^{OPT}_a}_b\}_{b \in B}$) satisfies less than $R$ edges, we arrive at the following inequality:
\begin{align*}
  \hspace*{1.5cm} R &> n'\left(\sum_{b \in B} \left(\frac{d^{OPT}_b}{n'}\right)^\frac{j+1}{2}\left(\frac{1}{|S^{a, \sigma^{OPT}_a}_b|}\right)^\frac{1}{j}1_{\sigma_b^{OPT} \in S^{a, \sigma^{OPT}_a}_b}\right) \\
  \hspace*{1.5cm} &\geq n'\left(\sum_{b \in \Gamma^{OPT}(a)} \left(\frac{d^{OPT}_b}{n'}\right)^\frac{j+1}{2}\left(\frac{1}{|S^{a, \sigma^{OPT}_a}_b|}\right)^\frac{1}{j}1_{\sigma_b^{OPT} \in S^{a, \sigma^{OPT}_a}_b}\right).
\end{align*}

Now, observe that, for every $b \in \Gamma^{OPT}(a)$, we have $1_{\sigma_b^{OPT} \in S^{a, \sigma^{OPT}_a}_b} = 1_{\sigma_b^{OPT} \in S_b}$. This is because, from our definition of $\Gamma^{OPT}$, $C_{(a, b)}(\sigma^{OPT}_a, \sigma^{OPT}_b) = 1$ for every $b \in \Gamma^{OPT}(a)$, which means that, if $\sigma^{OPT}_b$ is in $S_b$, then it remains in $S^{a, \sigma^{OPT}_a}_b$. Thus, the above inequality can be written as follows:
\begin{align} \label{ine:1}
  \hspace*{1.5cm} R &> n'\left(\sum_{b \in \Gamma^{OPT}(a)} \left(\frac{d^{OPT}_b}{n'}\right)^\frac{j+1}{2}\left(\frac{1}{|S^{a, \sigma^{OPT}_a}_b|}\right)^\frac{1}{j}1_{\sigma_b^{OPT} \in S_b}\right).
\end{align}

We will use inequality~(\ref{ine:1}) later in the proof. For now, we will turn our attention to the number of edges satisfied by the greedy algorithm, which, from our choice of $\sigma^*_b$, can be bounded as follows:
\begin{align*}
  \sum_{b \in B} \sum_{a \in A} C_{(a, b)}(\sigma^*_a, \sigma^*_b) &\geq \sum_{b \in B} \frac{1}{|S_b|} \sum_{\sigma_b \in S_b} \left(\sum_{a \in A} C_{(a, b)}(\sigma^*_a, \sigma_b)\right) \\
  &= \sum_{a \in A} \sum_{b \in B} \frac{1}{|S_b|} \left(\sum_{\sigma_b \in S_b} C_{(a, b)}(\sigma^*_a, \sigma_b)\right) \\
  (\text{From our choice of } \sigma_a^*) &\geq \sum_{a \in A} \sum_{b \in B} \frac{1}{|S_b|} \left(\sum_{\sigma_b \in S_b} C_{(a, b)}(\sigma^{OPT}_a, \sigma_b)\right) \\
  (\text{Since } C_{(a, b)}(\sigma^{OPT}_a, \sigma_b) = 1 \text{ for every } \sigma_b \in S^{a, \sigma^{OPT}_a}_b)&\geq \sum_{a \in A} \sum_{b \in B} \frac{1}{|S_b|} |S^{a, \sigma^{OPT}_a}_b| \\
  &= \sum_{a \in A} \sum_{b \in B} \frac{|S^{a, \sigma^{OPT}_a}_b|}{|S_b|} \\
  &\geq \sum_{a \in A} \sum_{b \in \Gamma^{OPT}(a)} \frac{|S^{a, \sigma^{OPT}_a}_b|}{|S_b|}.
\end{align*}

Moreover, from inequality~(\ref{ine:1}), we can derive the following inequalities:
\begin{align*}
  \hspace*{-1.5cm} &R^j\left(\sum_{a \in A} \sum_{b \in \Gamma^{OPT}(a)} \frac{|S^{a, \sigma^{OPT}_a}_b|}{|S_b|}\right) \\
  \hspace*{-1.5cm} &= \sum_{a \in A} R^j \left(\sum_{b \in \Gamma^{OPT}(a)} \frac{|S^{a, \sigma^{OPT}_a}_b|}{|S_b|}\right) \\
  \hspace*{-1.5cm} (\text{From~(\ref{ine:1}})) &\geq (n')^j \sum_{a \in A} \left(\sum_{b \in \Gamma^{OPT}(a)} \left(\frac{d^{OPT}_b}{n'}\right)^\frac{j+1}{2}\left(\frac{1}{|S^{a, \sigma^{OPT}_a}_b|}\right)^\frac{1}{j}1_{\sigma_b^{OPT} \in S_b}\right)^j \left(\sum_{b \in \Gamma^{OPT}(a)} \frac{|S^{a, \sigma^{OPT}_a}_b|}{|S_b|}\right) \\
  \hspace*{-1.5cm} (\text{H\"{o}lder's inequality}) &\geq (n')^j \sum_{a \in A} \left(\sum_{b \in \Gamma^{OPT}(a)} \left(\frac{d^{OPT}_b}{n'}\right)^\frac{j}{2}\left(\frac{1}{|S_b|}\right)^\frac{1}{j+1}1_{\sigma_b^{OPT} \in S_b}\right)^{j+1}
\end{align*}

By applying H\"{o}lder's inequality once again, the last term above is at least
\begin{align*}
  &(n')^j n' \left(\frac{1}{n'} \sum_{a \in A} \sum_{b \in \Gamma^{OPT}(a)} \left(\frac{d^{OPT}_b}{n'}\right)^\frac{j}{2}\left(\frac{1}{|S_b|}\right)^\frac{1}{j+1}1_{\sigma_b^{OPT} \in S_b}\right)^{j+1} \\
  &= \left(\sum_{b \in B} \sum_{a \in \Gamma^{OPT}(b)} \left(\frac{d^{OPT}_b}{n'}\right)^\frac{j}{2}\left(\frac{1}{|S_b|}\right)^\frac{1}{j+1}1_{\sigma_b^{OPT} \in S_b}\right)^{j+1} \\
  (\text{Since } d_b^{OPT} = |\Gamma^{OPT}(b)|) &= \left(\sum_{b \in B} d^{OPT}_b \left(\frac{d^{OPT}_b}{n'}\right)^\frac{j}{2}\left(\frac{1}{|S_b|}\right)^\frac{1}{j+1}1_{\sigma_b^{OPT} \in S_b}\right)^{j+1} \\
  &= \left(n' \sum_{b \in B} \left(\frac{d^{OPT}_b}{n'}\right)^\frac{j+2}{2}\left(\frac{1}{|S_b|}\right)^\frac{1}{j+1}1_{\sigma_b^{OPT} \in S_b}\right)^{j+1} \\
  &= R^{j+1}.
\end{align*}

Hence, we can conclude that $$\sum_{a \in A} \sum_{b \in \Gamma^{OPT}(a)} \frac{|S^{a, \sigma^{OPT}_a}_b|}{|S_b|} \geq R.$$ In other words, our greedy algorithm satisfies at least $R$ edges, which means that $P(j + 1)$ is also true for this second case.

As a result, $P(i)$ is true for every positive integer $i$, which completes the proof for Lemma~\ref{lem:main}.
\end{proof}

\section{Approximation Algorithm for Projection Games}

In this section, we will present our approximation algorithm for projection games. The main idea of this algorithm is a reduction from projection games on dense random graphs to free games, which we use together with the approximation algorithm for free games from Corollary~\ref{cor:free-game} above to prove Theorem~\ref{thm:dense-random-proj}. The reduction's properties can be stated formally as follows. \\

\begin{lemma} \label{lem:reduction-dense}
There is a polynomial-time reduction from a satisfiable projection game \\ $(q, A, B, E, \{\pi_e\}_{e \in E})$ where $(A, B, E)$ is sampled from a distribution $\mathcal{G}(n/2, n/2, p)$ where $p \geq 10\sqrt{\log n / n}$ to a satisfiable free game instance $(q', A', B', \{C_{(a, b)}\}_{(a, b) \in A' \times B'})$ such that, with probability $1 - o(1)$,
\begin{enumerate} \itemsep0em
  \item $|A'|, |B'| \leq |A|$ and $q' \leq q$, and
  \item For any $1 \geq \varepsilon \geq 0$, given an assignment $\varphi': A' \cup B' \to [q']$ to the free game instance of value $\varepsilon$, one can construct an assignment $\varphi: A \cup B \to [q]$ for the projection game of value $\Omega(\varepsilon)$ in polynomial time.
\end{enumerate}
\end{lemma}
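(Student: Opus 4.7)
The plan is to construct the free game on a split of $A$, using common neighbors in $B$ as certifiers. Partition $A$ into two equal halves $A = A_1 \sqcup A_2$ and set $A' = A_1$, $B' = A_2$, $q' = q$. For each $(a_1, a_2) \in A_1 \times A_2$, define $C_{(a_1, a_2)}(\sigma_1, \sigma_2) = 1$ iff $\pi_{(a_1, b)}(\sigma_1) = \pi_{(a_2, b)}(\sigma_2)$ holds for \emph{every} common neighbor $b \in \Gamma(a_1) \cap \Gamma(a_2)$ (vacuously true when no such $b$ exists). Condition (1) is immediate: $|A'| = |B'| = |A|/2 \leq |A|$ and $q' = q$. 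Satisfiability of the free game follows by plugging in the optimal projection-game assignment, since both sides of every conjunct then equal $\sigma_b^{OPT}$; note that it is the projection property that makes this \emph{for-all} definition compatible with satisfiability.

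To recover a projection-game assignment of value $\Omega(\varepsilon)$ from a free-game assignment $\varphi'$ of value $\varepsilon$, I would set $\sigma_a = \varphi'(a)$ for each $a \in A$ and then, for each $b \in B$, greedily pick $\sigma_b \in [q]$ maximizing the number of satisfied edges incident to $b$. The analysis rests on a double-counting identity coupled with random-graph concentration. For $b \in B$ and $\sigma \in [q]$, set $N_{b, \sigma} = \{a \in \Gamma(b) : \pi_{(a, b)}(\varphi'(a)) = \sigma\}$ and $W_b = \sum_{\sigma} |N_{b, \sigma} \cap A_1| \cdot |N_{b, \sigma} \cap A_2|$. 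Because the constraint is a conjunction over \emph{all} common neighbors, each of the $\varepsilon |A_1| |A_2|$ satisfied free-game edges $(a_1, a_2)$ contributes $|\Gamma(a_1) \cap \Gamma(a_2)|$ to $\sum_b W_b$. On the other hand, AM-GM gives $W_b \leq \tfrac{1}{4} \sum_{\sigma} |N_{b, \sigma}|^2 \leq \tfrac{1}{4} M_b \, d_b$, where $M_b = \max_{\sigma} |N_{b, \sigma}|$ is exactly the number of edges the greedy rule satisfies at $b$, so $\sum_b M_b \geq 4 \sum_b W_b / \max_b d_b$.

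To close the loop, standard Chernoff bounds plus a union bound, using the hypothesis $p \geq 10\sqrt{\log n / n}$ (so that $p^2 n = \Omega(\log n)$), give with probability $1 - o(1)$ that every $b \in B$ has degree $d_b = \Theta(pn)$ and every pair in $A_1 \times A_2$ has $\Theta(p^2 n)$ common neighbors. Plugging these bounds into the chain above yields $\sum_b M_b = \Omega(\varepsilon |A_1||A_2| \cdot p^2 n / (pn)) = \Omega(\varepsilon p n^2)$, which compared against $|E| = \Theta(p n^2)$ gives projection-game value $\Omega(\varepsilon)$. The main obstacle, beyond the uniform concentration step (whose $O(n^2)$-term union bound is precisely what the density threshold is calibrated for), is the \emph{for-all} formulation of the free-game constraints: a naive \emph{exists} version would only guarantee $\sum_b W_b \geq \varepsilon |A_1||A_2|$, leaking a factor of $\Theta(p^2 n) = \Omega(\log n)$ and falling short of the $\Omega(\varepsilon)$ target.
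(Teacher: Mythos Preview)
Your proposal is correct and matches the paper's proof essentially line for line: the same split $A = A_1 \sqcup A_2$ with the \emph{for-all}-over-common-neighbors constraint, the same greedy recovery on $B$, and the same double-counting analysis (the paper writes out your inequality $M_b\,d_b \geq \sum_\sigma |N_{b,\sigma}|^2$ as $\bigl(\sum_a 1_{\pi_{(a,b)}(\varphi'(a))=\sigma_b^*}\bigr)\bigl(\sum_\sigma \sum_a 1_{\ldots=\sigma}\bigr) \geq \sum_\sigma \bigl(\sum_a 1_{\ldots=\sigma}\bigr)^2$ and then restricts to $A_1\times A_2$ pairs afterwards rather than defining $W_b$ on $A_1\times A_2$ up front). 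The only cosmetic difference is that the paper invokes the concentration facts as two separate stated lemmas (degrees in $[np/10,10np]$ and common-neighbor counts $\geq np^2/10$) rather than folding them into the narrative.
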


Before we describe the reduction, we give a straightforward proof for Theorem~\ref{thm:dense-random-proj} based on the above lemma.

\begin{proof}[Proof of Theorem~\ref{thm:dense-random-proj} based on Lemma~\ref{lem:reduction-dense}]
The proof is simple. First, we use the reduction from Lemma~\ref{lem:reduction-dense} to transform a projection game on dense graph to a free game. Since the approximation ratio deteriorates by only constant factor with probability $1 - o(1)$ in the reduction, we can use the approximation algorithm from Corollary~\ref{cor:free-game} with $\lambda = 1$, which gives us an assignment of value at least $\Omega(1/N^\gamma)$.
\end{proof}

To prove the reduction lemma, we use the following two properties of random graphs. We do not prove the lemmas as they follow from a standard Chernoff bound. \\

\begin{lemma} \label{lem:prop-dense-random1}
When $p \geq 10\sqrt{\log n / n}$, with probability $1 - o(1)$, every vertex in $G \sim \mathcal{G}(n/2, n/2, p)$ has degree between $np/10$ and $10np$. \\
\end{lemma}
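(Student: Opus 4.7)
The plan is to apply a standard multiplicative Chernoff bound to the degree of each vertex and then take a union bound over the $n$ vertices of $G$.

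Fix a vertex $v \in A$ (the argument for $v \in B$ is symmetric). Since edges in $\mathcal{G}(n/2, n/2, p)$ are included independently, the degree $d_v = \sum_{u \in B} \mathbf{1}[(u,v) \in E]$ is a sum of $n/2$ independent $\mathrm{Bernoulli}(p)$ random variables, with mean $\mu := \mathbb{E}[d_v] = np/2$. The target interval $[np/10, 10np]$ can be rewritten as $[\mu/5, 20\mu]$, so it suffices to control both tails of $d_v$ around $\mu$.

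For the lower tail I would apply the multiplicative Chernoff inequality
\[
\Pr[d_v \le (1-\delta)\mu] \le \exp\!\bigl(-\delta^2 \mu / 2\bigr)
\]
with $\delta = 4/5$, and for the upper tail
\[
\Pr[d_v \ge (1+\delta')\mu] \le \exp\!\bigl(-\delta'^2 \mu / 3\bigr)
\]
with $\delta' = 19$. Both bounds give failure probability at most $\exp(-\Omega(\mu))$ for a fixed vertex. The hypothesis $p \ge 10\sqrt{\log n/n}$ forces $\mu = np/2 \ge 5\sqrt{n \log n}$, which is $\omega(\log n)$; hence each of the two per-vertex failure probabilities is $o(1/n)$.

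A union bound over the $n$ vertices of $G$ (that is, all of $A \cup B$) then ensures that with probability $1 - o(1)$ every vertex has degree in the interval $[\mu/5,\, 20\mu] \subseteq [np/10,\, 10np]$, as required. The only ``obstacle'' is bookkeeping the constants to make sure $\delta^2 \mu / 2 \gg \log n$; given the large slack in the assumption $p \ge 10\sqrt{\log n/n}$, this is routine, which is why the authors state the lemma without proof.
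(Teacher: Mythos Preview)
Your proposal is correct and matches the paper's own treatment: the authors explicitly omit the proof, stating that it ``follow[s] from a standard Chernoff bound,'' which is exactly the Chernoff-plus-union-bound argument you outline. (One small quibble: the form $\exp(-\delta'^2\mu/3)$ for the upper tail is usually stated only for $\delta'\le 1$; for $\delta'=19$ the standard bound is $\exp(-\delta'\mu/3)$, but this is still $\exp(-\Omega(\mu))$ and your conclusion is unaffected.)
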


\begin{lemma} \label{lem:prop-dense-random2}
In $G \sim \mathcal{G}(n/2, n/2, p)$ with $p \geq 10\sqrt{\log n / n}$, with probability $1 - o(1)$, every pair of vertices $a, a'$ on the left has at least $np^2/10$ common neighbors.
\end{lemma}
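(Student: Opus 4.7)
The plan is to fix a pair $a, a'$ on the left, express the number of common neighbors as a sum of independent Bernoullis, apply a multiplicative Chernoff bound, and then union-bound over all pairs. Concretely, for a fixed pair $a \neq a'$ in $A$, and for each $b \in B$, let $X_b$ be the indicator of the event that $b \in \Gamma(a) \cap \Gamma(a')$, i.e., that both edges $(a,b)$ and $(a',b)$ are present. Since these two edges are distinct (as $a \neq a'$) and since edges in $\mathcal{G}(n/2, n/2, p)$ are included independently, we have $\Pr[X_b = 1] = p^2$, and the $X_b$'s for different $b \in B$ involve disjoint edge slots, so they are mutually independent. Thus $X := \sum_{b \in B} X_b$ is a sum of $n/2$ independent Bernoulli$(p^2)$ random variables with $\mathbb{E}[X] = (n/2) p^2 = np^2/2$.

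The key step is to apply a multiplicative Chernoff bound to show $\Pr[X < np^2/10] \leq \exp(-\Omega(np^2))$. Plugging in the hypothesis $p \geq 10\sqrt{\log n / n}$ gives $np^2 \geq 100 \log n$, so this probability is at most $\exp(-\Omega(\log n)) = n^{-C}$ for a constant $C$ that can be made arbitrarily large (certainly larger than $3$) by adjusting the constants hidden in the Chernoff bound.

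To finish, take a union bound over all $\binom{|A|}{2} < n^2/8$ pairs $a, a' \in A$. The total failure probability is at most $n^2 \cdot n^{-C} = o(1)$, as required. On the complement event, every pair of left vertices has at least $np^2/10$ common neighbors.

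I do not anticipate a significant obstacle: the only subtle point is confirming the independence structure (which holds because edges are sampled independently and the events $\{X_b = 1\}$ for distinct $b$ depend on disjoint edges), and checking that the constant in $p \geq 10\sqrt{\log n/n}$ is large enough to dominate the $\log(n^2)$ factor from the union bound, which it comfortably is. The same calculation could be repeated verbatim for pairs on the right if desired, since the construction is symmetric.
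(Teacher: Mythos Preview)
Your proposal is correct and matches the paper's intended approach exactly: the paper explicitly omits the proof, stating that it ``follows from a standard Chernoff bound,'' which is precisely the argument you carry out (independent Bernoulli$(p^2)$ indicators, lower-tail Chernoff, union bound over the at most $n^2$ pairs).
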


Now, we are ready to prove the reduction lemma. Roughly speaking, the idea of the proof is to ``square'' the projection game, i.e., use $A$ as the vertices of the new game and, for each pair of vertices in $A$, add a constriant between them based on their constraints with their common neighbors in the projection game. This can be formalized as follows.

\begin{proof}[Proof of Lemma~\ref{lem:reduction-dense}]
The reduction proceeds as follows.
\begin{enumerate} \itemsep0em
\item Partition $A$ into $A_1, A_2$ of equal sizes. Then, set $A' \leftarrow A_1, B' \leftarrow A_2$ and $q' \leftarrow q$.
\item For each $a_1 \in A_1, a_2 \in A_2, \sigma_{a_1}, \sigma_{a_2} \in [q]$, let $C_{(a_1, a_2)}(\sigma_{a_1}, \sigma_{a_2})$ to be one if and only if these two assignments agree on every $b \in \Gamma(a_1) \cap \Gamma(a_2)$. In other words, $C_{(a_1, a_2)}(\sigma_{a_1}, \sigma_{a_2}) = 1$ if and only if $\pi_{(a_1, b)}(\sigma_{a_1}) = \pi_{(a_2, b)}(\sigma_{a_2})$ for every $b \in \Gamma(a_1) \cap \Gamma(a_2)$.
\end{enumerate}

It is obvious that the reduction runs in polynomial time, the first condition holds, and the new game is satisfiable. Thus, we only need to prove that, with probability $1 - o(1)$, the second condition is indeed true.

To show this, we present a simple algorithm that, given an assignment $\varphi': A' \cup B' \to [q']$ of the free game instance of value $\varepsilon$, output an assignment $\varphi: A \cup B \to [q]$ of the projection game of value $\Omega(\varepsilon)$. The algorithm works greedily as follows.
\begin{enumerate} \itemsep0em
\item For each $a \in A$, let $\varphi(a) \leftarrow \varphi'(a)$.
\item For each $b \in B$, pick $\varphi(b) = \sigma^*_b$ to be the assignment to $b$ that satisfies maximum number of edges, i.e., maximize $|\{a \in \Gamma(b) \mid \pi_{(a, b)}(\varphi(a)) = \sigma_b\}|$.
\end{enumerate}

Trivially, the algorithm runs in polynomial time. Thus, we only need to prove that, with probability $1 - o(1)$, the produced assignment is of value at least $\Omega(\varepsilon)$. To prove this, we will use the properties from Lemma~\ref{lem:prop-dense-random1} and Lemma~\ref{lem:prop-dense-random2}, which holds with probability $1 - o(1)$.

The number of satisfied edges can be written as follows.
\begin{align*}
  &\sum_{b \in B} \sum_{a \in \Gamma(b)} 1_{\pi_{(a, b)}(\varphi(a)) = \varphi(b)} = \sum_{b \in B} \sum_{a \in \Gamma(b)} 1_{\pi_{(a, b)}(\varphi'(a)) = \sigma^*_b}. \\
\end{align*}

Let $d_u$ be the degree of $u$ in $(A, B, E)$ for every $u \in A \cup B$, i.e. $d_u = |\Gamma(u)|$. We can further rearrange the above expression as follows.

\begin{align*}
\sum_{b \in B} \sum_{a \in \Gamma(b)} 1_{\pi_{(a, b)}(\varphi'(a)) = \sigma^*_b}
  &= \sum_{b \in B} \left[\frac{1}{d_b} \left(\sum_{a \in \Gamma(b)} 1_{\pi_{(a, b)}(\varphi'(a)) = \sigma^*_b}\right) d_b\right] \\
  &= \sum_{b \in B} \left[\frac{1}{d_b} \left(\sum_{a \in \Gamma(b)} 1_{\pi_{(a, b)}(\varphi'(a)) = \sigma^*_b}\right)\left(\sum_{a \in \Gamma(b)} 1\right)\right] \\
  &= \sum_{b \in B} \left[\frac{1}{d_b} \left(\sum_{a \in \Gamma(b)} 1_{\pi_{(a, b)}(\varphi'(a)) = \sigma^*_b}\right)\left(\sum_{a \in \Gamma(b)} \sum_{\sigma_b \in [q]} 1_{\pi_{(a, b)}(\varphi'(a)) = \sigma_b}\right)\right] \\
  &= \sum_{b \in B} \left[\frac{1}{d_b} \left(\sum_{a \in \Gamma(b)} 1_{\pi_{(a, b)}(\varphi'(a)) = \sigma^*_b}\right)\left(\sum_{\sigma_b \in [q]} \sum_{a \in \Gamma(b)} 1_{\pi_{(a, b)}(\varphi'(a)) = \sigma_b}\right)\right] \\
  &= \sum_{b \in B} \left[\frac{1}{d_b} \sum_{\sigma_b \in [q]} \left(\sum_{a \in \Gamma(b)} 1_{\pi_{(a, b)}(\varphi'(a)) = \sigma^*_b}\right)\left(\sum_{a \in \Gamma(b)} 1_{\pi_{(a, b)}(\varphi'(a)) = \sigma_b}\right)\right] \\
  (\text{From the choice of } \sigma^*_b)&\geq \sum_{b \in B} \left[\frac{1}{d_b} \sum_{\sigma_b \in [q]} \left(\sum_{a \in \Gamma(b)} 1_{\pi_{(a, b)}(\varphi'(a)) = \sigma_b}\right)^2\right] \\
  &= \sum_{b \in B} \left(\frac{1}{d_b} \sum_{\sigma_b \in [q]} \sum_{a, a' \in \Gamma(b)} 1_{\pi_{(a, b)}(\varphi'(a)) = \sigma_b}1_{\pi_{(a', b)}(\varphi'(a')) = \sigma_b}\right) \\
  &= \sum_{b \in B} \left(\frac{1}{d_b} \sum_{a, a' \in \Gamma(b)} \sum_{\sigma_b \in [q]} 1_{\pi_{(a, b)}(\varphi'(a)) = \sigma_b}1_{\pi_{(a', b)}(\varphi'(a')) = \sigma_b}\right)
\end{align*}

Observe that $\sum_{\sigma_b \in [q]} 1_{\pi_{(a, b)}(\varphi'(a)) = \sigma_b}1_{\pi_{(a', b)}(\varphi'(a')) = \sigma_b} = 1_{\pi_{(a, b)}(\varphi'(a)) = \pi_{(a', b)}(\varphi'(a'))}$. Thus, the number of satisfied edges is at least
$$\sum_{b \in B} \left(\frac{1}{d_b} \sum_{a, a' \in \Gamma(b)} 1_{\pi_{(a, b)}(\varphi'(a)) = \pi_{(a', b)}(\varphi'(a'))}\right).$$

Moreover, from Lemma~\ref{lem:prop-dense-random1}, $d_b \leq 10np$ for every $b \in B$ with probability $1 - o(1)$. This implies that, with probability $1 - o(1)$, the output assignment satisfied at least $$\frac{1}{10np} \sum_{b \in B} \sum_{a, a' \in \Gamma(b)} 1_{\pi_{(a, b)}(\varphi'(a)) = \pi_{(a', b)}(\varphi'(a'))}$$ edges.

We can further reorganize this quantity as follows.
\begin{align*}
  \frac{1}{10np} \sum_{b \in B} \sum_{a, a' \in \Gamma(b)} 1_{\pi_{(a, b)}(\varphi'(a)) = \pi_{(a', b)}(\varphi'(a'))} &\geq \frac{1}{10np} \sum_{b \in B} \sum_{{(a, a') \in A' \times B' \atop \text{s.t. } a, a' \in \Gamma(b)}} 1_{\pi_{(a, b)}(\varphi'(a)) = \pi_{(a', b)}(\varphi'(a'))} \\
  &= \frac{1}{10np} \sum_{(a, a') \in A' \times B'} \sum_{b \in \Gamma(a) \cap \Gamma(a')} 1_{\pi_{(a, b)}(\varphi'(a)) = \pi_{(a', b)}(\varphi'(a'))}.
\end{align*}

Now, observe that, from its definition, if $C_{(a, a')}(\varphi'(a), \varphi'(a'))$ is one, then $1_{\pi_{(a, b)}(\varphi'(a)) = \pi_{(a', b)}(\varphi'(a'))}$ is also one for every $b \in \Gamma(a) \cap \Gamma(a')$. Thus, we have
\begin{align*}
  \hspace{1.5cm} &\frac{1}{10np} \sum_{(a, a') \in A' \times B'} \sum_{b \in \Gamma(a) \cap \Gamma(a')} 1_{\pi_{(a, b)}(\varphi'(a)) = \pi_{(a', b)}(\varphi'(a'))} \\
  \hspace{1.5cm} &\geq \frac{1}{10np} \sum_{(a, a') \in A' \times B'} \sum_{b \in \Gamma(a) \cap \Gamma(a')} C_{(a, a')}(\varphi'(a), \varphi'(a')) \\
  \hspace{1.5cm} &= \frac{1}{10np} \sum_{(a, a') \in A' \times B'} |\Gamma(a) \cap \Gamma(a')| C_{(a, a')}(\varphi'(a), \varphi'(a')).
\end{align*}

From Lemma~\ref{lem:prop-dense-random2}, with probability $1 - o(1)$, $|\Gamma(a) \cap \Gamma(a')| \geq np^2/10$ for every $(a, a') \in A' \times B'$. Hence, we can conclude that the above expression is, with probability $1 - o(1)$, at least
\begin{align*}
  \frac{1}{10np} \sum_{(a, a') \in A' \times B'} \frac{np^2}{10} C_{(a, a')}(\varphi'(a), \varphi'(a')) &= \frac{p}{100} \sum_{(a, a') \in A' \times B'} C_{(a, a')}(\varphi'(a), \varphi'(a')).
\end{align*}

Next, note that $\sum_{(a, a') \in A' \times B'} C_{(a, a')}(\varphi'(a), \varphi'(a'))$ is the number of edges satisfied by $\varphi'$ in the free game, which is at least $\varepsilon |A'||B'| = \varepsilon n^2/16$. Thus, we have
$$  \frac{p}{100} \sum_{(a, a') \in A' \times B'} C_{(a, a')}(\varphi'(a), \varphi'(a')) \geq \frac{\varepsilon n^2p}{1600}.$$
Finally, again from Lemma~\ref{lem:prop-dense-random1}, the total number of edges is at most $5n^2p$ with probability $1 - o(1)$. As a result, with probability $1 - o(1)$, the algorithm outputs an assignment that satisfies at least $\frac{\varepsilon}{8000} = \Omega(\varepsilon)$ fraction of edges of the projection game instance as desired.
\end{proof}

\section{Approximation Algorithm for Densest $k$-Subgraph}

The main goal of this section is to prove Corollary~\ref{cor:dks}. As stated previously, we simply use our algorithm from Theorem~\ref{thm:main} together with a reduction from {\sc Max 2-CSP} to {\sc D$k$S} from~\cite{CHK}. First, let us start by stating the reduction from Theorem~\ref{thm:main}, which we rephrase as follows. \\

\begin{lemma}[\cite{CHK}] \label{lem:dks-reduction}
  There exists a randomized polynomial-time algorithm that, given a graph $G$ of $N$ vertices and an integer $k \leq N$, produces an instance $(q, V, E, \{C_e\}_{e \in E})$ of {\sc Max 2-CSP} such that
  \begin{itemize}
  \item $q \leq N, n = k$, and
  \item any solution to the instance can be translated in polynomial time to a subgraph of $G$ of $k$ vertices such that the number of edges in the subgraph equals to the number of edges satisfied by the {\sc Max 2-CSP} solution, and
  \item with constant probability, the number of edges satisfied by the optimal solution to the instance is at least $1/100$ times the number of edges in the densest $k$-subgraph of $G$.
  \end{itemize}
\end{lemma}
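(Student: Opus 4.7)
The plan is to implement the Charikar--Hajiaghayi--Karloff bucketing reduction: independently and uniformly place each vertex of $G$ into one of $k$ buckets $V_1,\dots,V_k$, and build a {\sc Max 2-CSP} on variable set $[k]$ (so $n=k$) with alphabet $[q]$ of size $q\le N$ obtained by choosing a surjection $[q]\to V_i$ for each $i$ (padding arbitrarily), where the constraint $C_{(i,j)}(\sigma,\sigma')$ is satisfied iff the corresponding vertices $v_{i,\sigma}\in V_i$ and $v_{j,\sigma'}\in V_j$ are adjacent in $G$. The reduction runs in polynomial time and has the stated size, which gives the first bullet. For the translation bullet, an assignment $\varphi$ is mapped to the $k$-subgraph induced on $\{v_{i,\varphi(i)}\}_{i\in[k]}$; since the $V_i$'s are pairwise disjoint, these $k$ vertices are distinct, and its edges are in one-to-one correspondence with the satisfied constraints.

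The substantive step is the value bullet. Let $S=\{s_1,\dots,s_k\}$ be a densest $k$-subgraph with $m$ edges, and analyze the random reference assignment $\varphi^\star$ that, for each bucket $i$ with $V_i\cap S\ne\emptyset$, picks a uniformly random element of $V_i\cap S$. For each edge $(s_a,s_b)\in E(S)$, with probability $(k-1)/k$ the two endpoints land in distinct buckets $i\ne j$, and conditionally on $s_a\in V_i$, $s_b\in V_j$ the probability that $\varphi^\star$ selects both of them equals $\mathbb{E}[1/(XY)]$, where $X=|V_i\cap S|$ and $Y=|V_j\cap S|$ are each $1$ plus a coordinate of a $\mathrm{Mult}(k-2,(1/k,\dots,1/k))$ distribution; in particular $\mathbb{E}[X],\mathbb{E}[Y]\le 2$ and $X,Y$ are negatively correlated. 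Jensen's inequality applied to the convex map $z\mapsto 1/z$ gives $\mathbb{E}[1/(XY)]\ge 1/\mathbb{E}[XY]$, and negative correlation gives $\mathbb{E}[XY]\le\mathbb{E}[X]\mathbb{E}[Y]\le 4$, so each edge of $S$ is satisfied by $\varphi^\star$ with probability at least $1/8$, hence the optimal {\sc Max 2-CSP} value $V$ satisfies $\mathbb{E}[V]\ge m/8$. Since $V\le m$ always, a one-line averaging (Markov applied to $m-V\ge 0$) converts this into $\Pr[V\ge m/100]\ge\Omega(1)$, finishing the third bullet.

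The main obstacle I anticipate is the lower bound on $\mathbb{E}[1/(XY)]$: because $X$ and $Y$ are only negatively correlated rather than independent, the naive route of applying Jensen to each factor and multiplying is invalid, and in fact goes in the wrong direction, since negative correlation of $(X,Y)$ propagates to $(1/X,1/Y)$ and yields only the upper bound $\mathbb{E}[1/(XY)]\le\mathbb{E}[1/X]\mathbb{E}[1/Y]$. The clean fix is to apply Jensen to the product $XY$ first and only then invoke negative correlation to bound $\mathbb{E}[XY]$ from above; this is the only delicate step, as the rest of the proof (polynomial construction, disjointness giving the clean assignment-to-subgraph correspondence, and the final Markov step) is essentially bookkeeping.
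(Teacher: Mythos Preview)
The paper does not actually prove this lemma: immediately after stating it, the authors write ``We will not show the proof of Lemma~\ref{lem:dks-reduction} here; please refer to Theorem~6 from~\cite{CHK} for the proof.'' So there is no in-paper argument to compare against; your proposal is a reconstruction of the cited \cite{CHK} reduction, and as such it is essentially correct. The random-bucketing construction, the assignment-to-subgraph translation, the analysis of $\varphi^\star$ via $\mathbb{E}[1/(XY)]\ge 1/\mathbb{E}[XY]$ combined with negative correlation of multinomial coordinates to get $\mathbb{E}[XY]\le \mathbb{E}[X]\mathbb{E}[Y]\le 4$, and the final Markov step using $V\le m$ (since $m$ is the densest-$k$-subgraph edge count) are all sound, and your diagnosis of why one must Jensen on the product rather than factorwise is exactly right.

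One small gap worth patching: your ``surjection $[q]\to V_i$'' and the claim that the $k$ output vertices are distinct both break when some bucket $V_i$ is empty, which happens with non-negligible probability. The standard fix is to let padding symbols be ``null'' (all incident constraints unsatisfiable) and, in the translation step, fill any null coordinates with arbitrary unused vertices of $G$; this yields a $k$-vertex subgraph with \emph{at least} as many edges as satisfied constraints. That weakening from ``equals'' to ``at least'' is harmless for the downstream application (Corollary~\ref{cor:dks}), and in fact is all that the proof of that corollary uses.
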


We will not show the proof of Lemma~\ref{lem:dks-reduction} here; please refer to Theorem 6 from~\cite{CHK} for the proof. Instead, we will now show how to use the reduction to arrive at the proof of Corollary~\ref{cor:dks}.

\begin{proof}[Proof of Corollary~\ref{cor:dks}]
  First, we note that, to prove Corollary~\ref{cor:dks}, it is enough to find a randomized polynomial-time algorithm with similar approximation guarantee to that in Corollary~\ref{cor:dks} except that the probability of success is a constant (instead of high probability as stated in Corollary~\ref{cor:dks}). This is because we can then repeatedly run this algorithm $\Theta(\log n)$ times and produce the desired result.

  The algorithm proceeds as follows:
  \begin{enumerate}
    \item Use the reduction from Lemma~\ref{lem:dks-reduction} on the input graph $G$ and $k$ to produce $(q, V, E, \{C_e\}_{e \in E})$.
    \item Run the algorithm from Theorem~\ref{thm:main} on $(q, V, E, \{C_e\}_{e \in E})$. \label{step:dks-main-csp}
    \item Transform the assignment from previous step according to Lemma~\ref{lem:dks-reduction} and output the result.
  \end{enumerate}

  From the property of the reduction, we know that, with constant probability, the optimal assignment to $(q, V, E, \{C_e\}_{e \in E})$ satisfies $\Omega(\delta k^2)$ edges. If this is the case, we can conclude that the density of $(V, E)$ is $\Omega(\delta)$ and, similarly, that the value of the instance is $\Omega(\delta)$. As a result, the output assignment from step~\ref{step:dks-main-csp} has value at least $\Omega(\delta^{O(1/\gamma)}N^{-\gamma})$. Since the reduction from Lemma~\ref{lem:dks-reduction} preserves the optimum, our algorithm produces a subgraph of density at least $\Omega(\delta^{O(1/\gamma)}N^{-\gamma})$ as well, which concludes our proof for this corollary.
\end{proof}

\section{QPTAS for Dense Max 2-CSPs}

At first glance, it seems that the QPTAS would follow easily for our main theorem. This, however, is not the case as the algorithm in the main theorem always loses at least a constant factor. Instead, we need to give an algorithm that is similar to that of the main theorem but have a stronger guarantee in approximation ratio for satisfiable instances, which can be stated as follows. \\

\begin{lemma} \label{lem:approx-complete-game}
  For every positive integer $i > 0$, there exists an $O\left((nq)^{O(i)}\right)$-time algorithm that, for any satisfiable {\sc Max 2-CSP} instance on the complete graph, produces an assignment of value at least $1/q^{1/i}$.
\end{lemma}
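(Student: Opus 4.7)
The plan is to directly mirror the inductive argument proving Lemma~\ref{lem:main}, now applied to satisfiable Max~2-CSP instances on the complete graph $K_n$. I would define an inductive predicate $P'(i)$: there exists an $O((nq)^{O(i)})$-time algorithm {\sc Approx-Complete}$_i(q, V, \{C_e\}, \{S_v\}_{v\in V})$ that, given a satisfiable Max~2-CSP on $K_n$ together with reduced alphabet sets $\{S_v\}$ each containing the optimal label $\sigma_v^{OPT}$, outputs an assignment satisfying at least
\[
R_i \;=\; \frac{n-1}{2}\sum_{v\in V}\left(\frac{1}{|S_v|}\right)^{1/i}
\]
edges. Because the instance is satisfiable, $d_v^{OPT}=n-1$ for every $v$, which is exactly what causes the analogue of the $(d_v^{OPT}/n')^{(i+1)/2}$ factor appearing in Lemma~\ref{lem:main}'s target to collapse to a constant. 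Setting $S_v=[q]$ for all $v$ then gives $R_i \geq \binom{n}{2}q^{-1/i}$, yielding value at least $1/q^{1/i}$ as required.

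For the base case $P'(1)$, I would use the natural analogue of the greedy in the proof of Lemma~\ref{lem:main}: for each vertex $v$, pick $\sigma_v^*\in S_v$ maximizing
\[
\sum_{u\ne v}\frac{1}{|S_u|}\sum_{\sigma_u\in S_u} C_{(u,v)}(\sigma_v,\sigma_u);
\]
comparing to the choice $\sigma_v = \sigma_v^{OPT}$ and summing over $v$ (dividing by $2$ to correct for undirected double-counting) yields the $R_1$ bound. For the inductive step, given $P'(j)$, I would build {\sc Approx-Complete}$_{j+1}$ by iterating over all $(v,\sigma_v)\in V\times[q]$, forming the reduced alphabets $S_u^{v,\sigma_v} = \{\sigma_u\in S_u : C_{(u,v)}(\sigma_v,\sigma_u)=1\}$ and calling {\sc Approx-Complete}$_j$ on the resulting instance, and additionally running the base-case greedy once. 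Either some recursive call already satisfies $\geq R_{j+1}$ edges, or, plugging $\sigma_v=\sigma_v^{OPT}$ into the $P'(j)$ bound for each $v$ gives upper bounds on $R_{j+1}$ in terms of $|S_u^{v,\sigma_v^{OPT}}|$; combining these with the greedy's lower bound $\sum_v\sum_{u\ne v}|S_u^{v,\sigma_v^{OPT}}|/|S_u|$ and two applications of H\"older's inequality, exactly as in Lemma~\ref{lem:main}'s inductive step, shows the greedy output already satisfies $\geq R_{j+1}$ edges.

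The main obstacle I expect is correctly handling the non-bipartite structure. In Lemma~\ref{lem:main} the outer sum over $a \in A$ and the inner sum over $b \in B$ were over disjoint index sets, which made the two H\"older applications and the identification of the resulting double sum with $|E^{OPT}|$ clean; here both indices range over $V$, so one has to symmetrize by summing over ordered pairs $(v,u)$ with $v\ne u$, use the symmetry $C_{(v,u)}=C_{(u,v)}$, and absorb an extra factor of two from undirected double-counting. A secondary but minor issue is ensuring the output value is exactly $1/q^{1/i}$ rather than $\Omega(1/q^{1/i})$: any constant loss can be absorbed by invoking $P'(i')$ for $i'=i+O(1)$, which only shifts the $(nq)^{O(i)}$ running time by a constant factor in the exponent.
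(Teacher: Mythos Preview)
Your plan has a real gap, and it is precisely the non-bipartite issue you flagged but underestimated. The greedy you describe for the base case does \emph{not} bound the number of satisfied edges. In the bipartite proof of Lemma~\ref{lem:main} the two-stage structure is essential: first one fixes $\{\sigma_a^*\}_{a\in A}$, and only then chooses each $\sigma_b^*$ to maximize $\sum_a C_{(a,b)}(\sigma_a^*,\sigma_b)$; the ``max $\geq$ average'' step at stage two is what lets you pass from actual satisfied edges to the averaged quantity $\sum_a\sum_b\frac{1}{|S_b|}\sum_{\sigma_b}C_{(a,b)}(\sigma_a^*,\sigma_b)$. If instead you pick every $\sigma_v^*$ simultaneously to maximize $\sum_{u\ne v}\frac{1}{|S_u|}\sum_{\sigma_u}C_{(u,v)}(\sigma_v,\sigma_u)$, you can certainly compare this quantity to its value at $\sigma_v^{OPT}$, but you have no inequality relating $\sum_{u\ne v}C_{(u,v)}(\sigma_v^*,\sigma_u^*)$ to $\sum_{u\ne v}\frac{1}{|S_u|}\sum_{\sigma_u}C_{(u,v)}(\sigma_v^*,\sigma_u)$. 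The same problem reappears in your inductive step: the lower bound $\frac{1}{2}\sum_v\sum_{u\ne v}|S_u^{v,\sigma_v^{OPT}}|/|S_u|$ on the greedy output, which is the analogue of the key inequality in the bipartite inductive step, does not follow from your one-stage greedy (nor from a plain random assignment, which only gives an average over $\sigma_v\in S_v$ rather than the value at $\sigma_v^{OPT}$). Symmetrizing over ordered pairs and dividing by two does not fix this.

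The paper's proof takes a genuinely different route to get around this. The inductive target is the geometric quantity $\bigl(\prod_u 1/|S_u|\bigr)^{1/(ni)}$ rather than your arithmetic sum, and the base case is not a greedy but a \emph{cleanup} (arc-consistency) pass---iteratively deleting any $\sigma_u\in S_u$ that has no compatible $\sigma_v\in S_v$ for some $v$---followed by a uniform random assignment. After cleanup each edge $(u,v)$ has at least $\max(|S_u|,|S_v|)$ satisfying pairs, so the expected value per edge is at least $1/\min(|S_u|,|S_v|)\ge 1/\sqrt{|S_u||S_v|}$, and AM--GM over edges yields the geometric target. The inductive step is also structurally different: rather than a one-shot ``recursive call vs.\ greedy'' dichotomy with H\"older, the paper repeatedly runs the level-$j$ algorithm on $\{S_v^{u,\sigma_u}\}$ and, whenever the call falls short of the target, \emph{removes} $\sigma_u$ from $S_u$ (noting $\sigma_u^{OPT}$ is never removed); once no more removals occur, a random assignment is analyzed via AM--GM. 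Finally, your fallback of absorbing a constant loss by increasing $i$ does not give the exact $1/q^{1/i}$ bound uniformly in $q$: writing $c/q^{1/i'}\ge 1/q^{1/i}$ forces $q\ge (1/c)^{\Theta(i)}$, so small $q$ is not covered.
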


Lemma~\ref{lem:approx-complete-game} can be viewed as a special case of the main theorem when the graph is complete. However, it should be noted that Lemma~\ref{lem:approx-complete-game} is more exact in the sense that the guaranteed lower bound of the value of the output assignment is not asymptotic. The proof of this lemma is also similar to that of Lemma~\ref{lem:main} except that we need slightly more complicated algorithm and computation to deal with the fact that the underlying graph is not bipartite.

\begin{proof}[Proof of Lemma~\ref{lem:approx-complete-game}]
We will prove the lemma by induction. Note that throughout the proof, we will not worry about the randomness that the algorithm employs; it is not hard to see that the random assignment algorithms described below can be derandomized via greedy approach so that the approximation guarantees are as good as the expected guarantees of the randomized ones and that we still end up with the same asymptotic running time.

Let $P(i)$ represent the following statement: there exists an $O\left((nq)^{3i}\right)$-time algorithm {\sc Approx-CompleteGame$_i$}($q, V, E, \{C_{e}\}_{e \in E}, \{S_u\}_{u \in V}$) that takes in a satisfiable {\sc Max 2-CSP} instance $(q, V, E, \{C_{e}\}_{e \in V})$ where $(V, E)$ is a complete graph and a reduced alphabet set $S_u$ for every $u \in U$ such that, if $\sigma^{OPT}_u \in S_u$ for every $u \in V$, then the algorithm outputs an assignment of value at least $\left(\prod_{u \in V} \frac{1}{|S_u|}\right)^{\frac{1}{ni}}$.

Observe that $P(i)$ implies the lemma by simply setting $S_u = [q]$ for every $u \in V$.

{\em Base Case.} The algorithm {\sc Approx-CompleteGame$_1$}($q, V, E, \{C_{e}\}_{e \in E}, \{S_u\}_{u \in V}$) is a simple random assignment algorithm. However, before we randomly pick the assignment, we need to first discard the alphabets that we know for sure are not optimal. More specifically, {\sc Approx-CompleteGame$_1$}($q, V, E, \{C_{e}\}_{e \in E}, \{S_u\}_{u \in V}$) works as follows.
\begin{enumerate}
  \item While there exist $u, v \in U$ and $\sigma_u \in S_u$ such that $C_{(u, v)}(\sigma_u, \sigma_v) = 0$ for every $\sigma_v \in S_v$, remove $\sigma_u$ from $S_u$. \label{step:remove}
  \item For each $u \in V$, pick $\varphi(u)$ independently and uniformly at random from $S_u$. Output $\varphi$.
\end{enumerate}

It is obvious that the algorithm runs in $O(n^3q^3)$ time as desired.

Now, we will show that, if $\sigma_u^{OPT} \in S_u$ for every $u \in V$, then the algorithm gives an assignment that is of value at least $\left(\prod_{u \in V} \frac{1}{|S_u|}\right)^{\frac{1}{n}}$ in expectation.

First, observe that $\sigma_u^{OPT}$ remains in $S_u$ after step~\ref{step:remove} for every $u \in V$. This is because $C_{(u, v)}(\sigma_u^{OPT}, \sigma_v^{OPT}) = 1$ for every $v \neq u$.

Next, Consider the expected number of satisfied edges by the output assignment, which can be rearranged as follows:
\begin{align*}
  \mathbb{E}\left[\sum_{(u, v) \in E} C_{(u, v)}(\varphi(u), \varphi(v))\right] &= \sum_{(u, v) \in E} \mathbb{E}\left[C_{(u, v)}(\varphi(u), \varphi(v))\right] \\
  &= \sum_{(u, v) \in E} \frac{1}{|S_u||S_v|} \sum_{\sigma_u \in S_u} \sum_{\sigma_v \in S_v} C_{(u, v)}(\sigma_u, \sigma_v). \\
\end{align*}

From the condition of the loop in step~\ref{step:remove}, we know that after the loop ends, for each $\sigma_u \in S_u$, there must be at least one $\sigma_v \in S_v$ such that $C_{(u, v)}(\sigma_u, \sigma_v) = 1$. In other words, $$\sum_{\sigma_u \in S_u} \sum_{\sigma_v \in S_v} C_{(u, v)}(\sigma_u, \sigma_v) \geq \sum_{\sigma_u \in S_u} 1 = |S_u|.$$

Similarly, we can also conclude that $$\sum_{\sigma_u \in S_u} \sum_{\sigma_v \in S_v} C_{(u, v)}(\sigma_u, \sigma_v) \geq |S_v|.$$

Thus, we have $$\sum_{\sigma_u \in S_u} \sum_{\sigma_v \in S_v} C_{(u, v)}(\sigma_u, \sigma_v) \geq \max\{|S_u|,|S_v|\}$$ for every $u \neq v$.

Hence, we can bound the expected number of satisfied edges as follows:
\begin{align*}
  \sum_{(u, v) \in E} \frac{1}{|S_u||S_v|} \sum_{\sigma_u \in S_u} \sum_{\sigma_v \in S_v} C_{(u, v)}(\sigma_u, \sigma_v)
  &\geq \sum_{(u, v) \in E} \frac{1}{|S_u||S_v|} \max\{|S_u|, |S_v|\} \\
  &= \sum_{(u, v) \in E} \frac{1}{\min\{|S_u|, |S_v|\}} \\
  &\geq \sum_{(u, v) \in E} \frac{1}{\sqrt{|S_u||S_v|}} \\
  (\text{A.M. - G.M. inequality}) &\geq |E| \left(\prod_{(u, v) \in E} \frac{1}{\sqrt{|S_u||S_v|}}\right)^{\frac{1}{|E|}} \\
  &= |E| \left(\prod_{(u, v) \in E} \frac{1}{\sqrt{|S_u||S_v|}}\right)^{\frac{2}{n(n-1)}} \\
  (\text{Each } u \in V \text{ appears in exactly } n - 1 \text{ edges}) &= |E| \left(\left(\prod_{u \in V} \frac{1}{|S_u|}\right)^{(n-1)/2}\right)^{\frac{2}{n(n-1)}} \\
  &= |E| \left(\prod_{u \in V} \frac{1}{|S_u|}\right)^{1/n},
\end{align*}
which implies that $P(1)$ is true as desired.

{\em Inductive Step.} Let $j$ be any positive integer. Suppose that $P(j)$ holds.

We will now describe {\sc Approx-CompleteGame$_{j+1}$} based on {\sc Approx-CompleteGame$_j$} as follows.

\begin{enumerate}
\item Define $R$ to be $\left(\prod_{u \in V} \frac{1}{|S_u|}\right)^{\frac{1}{n(j + 1)}}$, our target value we want to achieve.
\item Run the following steps~\ref{step:reduce} to~\ref{step:exc} until no $S_u$ is modified by neither step~\ref{step:exc} nor step~\ref{step:empty}.
  \begin{enumerate}
  \item For each $u \in V$ and $\sigma_u \in S_u$, do the following:
    \begin{enumerate}
    \item For each $v \in V$, compute $S_v^{u, \sigma_u} = \{\sigma_v \in S_v \mid C_{(u, v)}(\sigma_u, \sigma_v) = 1\}$. This is the set of reduced assignments of $v$ if we assign $\sigma_u$ to $u$. Note that when $v = u$, let $S_u = \{\sigma_u\}$. \label{step:reduce}
    \item If $S_v^{u, \sigma_u} = \emptyset$ for some $v \in V$, then remove $\sigma_u$ from $S_u$ and continue to the next $u, \sigma_u$ pair. \label{step:empty}
    \item Compute $R^{u, \sigma_u} = \left(\prod_{v \in V} \frac{1}{|S_v^{u, \sigma_u}|}\right)^{\frac{1}{nj}}$. If $R' < R$, continue to the next $u, \sigma_u$ pair.
    \item Execute {\sc Approx-CompleteGame$_j$}($q, V, E, \{C_{e}\}_{e \in E}, \{S_v^{u, \sigma_u}\}_{v \in V}$). If the output assignment is of value less than $R^{u, \sigma_u}$, then remove $\sigma_u$ from $S_u$. Otherwise, return the output assignment as the output to {\sc Approx-CompleteGame$_{j+1}$}. \label{step:exc}
    \end{enumerate}
  \end{enumerate}
\item If the loop in the previous step ends without outputting any assignment, just output a random assignment (i.e. pick $\varphi(u)$ independently and uniformly at random from $S_u$). \label{step:random-assignment}
\end{enumerate}

Observe first that the loop can run at most $nq$ times as the total number of elements of $S_v$'s for all $v \in V$ is at most $nq$. This means that we call {\sc Approx-CompleteGame$_{j}$} at most $nq$ times. Since every step except the {\sc Approx-CompleteGame$_{j}$} calls takes $O((nq)^3)$ time and we call {\sc Approx-CompleteGame$_{j}$} only at most $n^2q^2$ times, we can conclude that the running time of {\sc Approx-CompleteGame$_{j+1}$} is $O((nq)^{3j+3})$ as desired.

The only thing left to show is that the assignment output from the algorithm indeed is of expected value at least $R$. To do so, we will consider two cases.

First, if step~\ref{step:random-assignment} is never reached, the algorithm must terminate at step~\ref{step:exc}. From the return condition in step~\ref{step:exc}, we know that the output assignment is of value at least $R^{u, \sigma_u} \geq R$ as desired.

In the second case where step~\ref{step:random-assignment} is reached, we first observe that when we remove $s_u$ from $S_u$ in step~\ref{step:exc}, the instance is still satisfiable. The reason is that, if $\sigma_u = \sigma_u^{OPT}$ is the optimal assignment for $u$, then $\sigma_v^{OPT}$ remains in $S_v^{u, \sigma_u}$ for every $v \in V$. Hence, from our inductive hypothesis, the output assignment from {\sc Approx-CompleteGame$_j$}($q, V, E, \{C_{e}\}_{e \in E}, \{S_v^{u, \sigma_u}\}_{v \in V}$) must be of value at least $R^{u, \sigma_u}$. As a result, we never remove $s^{OPT}_u$ from $S_u$, and, thus, the instance remains satisfiable throughout the algorithm.

Moreover, notice that, if $R^{u, \sigma_u} \geq R$ for any $u, \sigma_u$, we either remove $\sigma_u$ from $S_u$ or output the desired assignment. This means that, when step~\ref{step:random-assignment} is reached, $R^{u, \sigma_u} < R$ for every $u \in V$ and $\sigma_u \in S_u$.

Now, let us consider the expected number of edges satisfied by the random assignment. Since our graph $(V, E)$ is complete, it can be written as follows.

\begin{align*}
  \mathbb{E}\left[\sum_{(u, v) \in E} C_{(u, v)}(\varphi(u), \varphi(v))\right] &= \mathbb{E}\left[\frac{1}{2}\sum_{u \in V} \sum_{v \in V \atop v \neq u} C_{(u, v)}(\varphi(u), \varphi(v))\right] \\
  &= \frac{1}{2}\sum_{u \in V} \sum_{{v \in V \atop v \neq u}} \mathbb{E}\left[C_{(u, v)}(\varphi(u), \varphi(v))\right] \\
  &= \frac{1}{2}\sum_{u \in V} \sum_{{v \in V \atop v \neq u}} \frac{1}{|S_u||S_v|} \left(\sum_{\sigma_u \in S_u} \sum_{\sigma_v \in S_v} C_{(u, v)}(\sigma_u, \sigma_v)\right) \\
  (\text{From definition of } S_v^{u, \sigma_u}) &= \frac{1}{2}\sum_{u \in V} \sum_{{v \in V \atop v \neq u}} \frac{1}{|S_u||S_v|} \left(\sum_{\sigma_u \in S_u} |S_v^{u, \sigma_u}|\right) \\
  &= \frac{1}{2} \sum_{u \in V} \frac{1}{|S_u|} \sum_{\sigma_u \in S_u} \left(\sum_{{v \in V \atop v \neq u}} \frac{|S_v^{u, \sigma_u}|}{|S_v|}\right) \\
  (\text{A.M.-G.M. inequality}) &\geq \frac{1}{2} \sum_{u \in V} \frac{1}{|S_u|} \sum_{\sigma_u \in S_u} (n - 1)\sqrt[n-1]{\prod_{{v \in V \atop v \neq u}} \frac{|S_v^{u, \sigma_u}|}{|S_v|}} \\
 &= \frac{(n - 1)}{2} \sum_{u \in V} \frac{1}{|S_u|} \sum_{\sigma_u \in S_u} \sqrt[n-1]{\frac{\prod_{{v \in n \atop v \neq u}} |S_v^{u, \sigma_u}|}{\prod_{{v \in V \atop v \neq u}} |S_v|}} \\
 (\text{From our definition of } R^{u, \sigma_u}, R) &= \frac{(n - 1)}{2} \sum_{u \in V} \frac{1}{|S_u|} \sum_{\sigma_u \in S_u} \sqrt[n-1]{\frac{(R^{u, \sigma_u})^{-nj}}{\frac{R^{-n(j + 1)}}{|S_u|}}} \\
 (\text{Since } R^{u, \sigma_u} < R) &> \frac{(n - 1)}{2} \sum_{u \in V} \frac{1}{|S_u|} \sum_{\sigma_u \in S_u} \sqrt[n-1]{|S_u|R^{n}} \\
 &= \frac{(n - 1)}{2} \sum_{u \in V} \sqrt[n-1]{|S_u|R^{n}} \\
 &= \frac{(n - 1)}{2} R^{n/(n-1)} \left(\sum_{u \in V} \sqrt[n-1]{|S_u|} \right) \\
 (\text{A.M.-G.M. inequality}) &\geq \frac{(n-1)}{2} R^{n/(n-1)} \left(n\sqrt[n(n-1)]{\prod_{u \in V} |S_u|}\right) \\
 (\text{From our definition of } R) &= \frac{(n-1)}{2} R^{n/(n-1)} \left(n\sqrt[n(n-1)]{\prod_{u \in V} R^{-n(j + 1)}}\right) \\
 &= \frac{n(n-1)}{2} R^{(n-1-j)/(n-1)} \\
 (\text{Since } R \leq 1 \text{ and } j \geq 0) &\geq \frac{n(n-1)}{2} R.
\end{align*}

Since $\frac{n(n-1)}{2}$ is the number of edges in $(V, E)$, we can conclude that the random assignment is indeed of expected value at least $R$.

Thus, we can conclude that $P(j + 1)$ is true. As a result, $P(i)$ is true for every positive integer $i$, which completes the proof for Lemma~\ref{lem:approx-complete-game}.
\end{proof}

Next, we will prove Corollary~\ref{cor:qptas-dense} by reducing it to {\sc Max 2-CSP} on complete graph, and, then plug in Lemma~\ref{lem:approx-complete-game} with appropriate $i$ to get the result.

First, observe that, since $\log(1 + \varepsilon') = \Omega(\varepsilon')$ for every $1 \geq \varepsilon' > 0$, by plugging in $i = C \log q/\varepsilon'$ for large enough constant $C$ into Lemma~\ref{lem:approx-complete-game}, we immediately arrive the following corollary. \\

\begin{corollary} \label{cor:qptas-complete}
  For any $1 \geq \varepsilon' > 0$, there exists an $(1 + \varepsilon')$-approximation algorithm for satisfiable {\sc Max 2-CSP} on the complete graph that runs in time $N^{O(\varepsilon'^{-1}\log N)}$.
\end{corollary}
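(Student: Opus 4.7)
The plan is to invoke Lemma~\ref{lem:approx-complete-game} as a black box with an appropriately large choice of $i$ depending on $\varepsilon'$ and $q$. Concretely, on a satisfiable Max 2-CSP instance on the complete graph, taking $S_u = [q]$ for every $u$ in the lemma produces, in time $O((nq)^{O(i)})$, an assignment whose value is at least $1/q^{1/i}$. Since the instance is satisfiable, the optimum value is $1$, so the multiplicative approximation ratio achieved is $q^{1/i}$.

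Next, I would pick $i$ just large enough to force $q^{1/i} \leq 1+\varepsilon'$. Taking logs, this is equivalent to $i \geq (\log q)/\log(1+\varepsilon')$. Using the elementary bound $\log(1+\varepsilon') \geq \varepsilon'/C_0$ for some absolute constant $C_0$ whenever $0 < \varepsilon' \leq 1$ (which is the $\Omega(\varepsilon')$ bound flagged in the text just before the corollary), it suffices to set $i = \lceil C\log q/\varepsilon'\rceil$ for a sufficiently large absolute constant $C$. With this choice the approximation ratio is at most $1+\varepsilon'$ as required.

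Finally, I would check the running time. The lemma runs in $(nq)^{O(i)} = N^{O(i)}$ time, where $N = nq$. Substituting $i = O(\log q/\varepsilon') = O(\log N/\varepsilon')$ gives a running time of $N^{O(\varepsilon'^{-1}\log N)}$, matching the statement of the corollary. Outputting the assignment returned by {\sc Approx-CompleteGame}$_i$ then completes the algorithm.

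There is no substantive obstacle here: the real work was done in Lemma~\ref{lem:approx-complete-game}, and this corollary is just a parameter choice. The only minor thing to be careful about is that the lower bound $\log(1+\varepsilon') = \Omega(\varepsilon')$ holds only when $\varepsilon'$ is bounded away from infinity (which the hypothesis $\varepsilon' \leq 1$ provides), so the constant $C$ hidden in $i = \lceil C\log q/\varepsilon'\rceil$ can be chosen once and for all, independently of the instance.
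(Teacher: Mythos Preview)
Your proposal is correct and essentially identical to the paper's own argument: the paper also invokes Lemma~\ref{lem:approx-complete-game} with $i = C\log q/\varepsilon'$ for a large enough constant $C$, justified by the bound $\log(1+\varepsilon') = \Omega(\varepsilon')$ for $0<\varepsilon'\leq 1$, and reads off the running time $N^{O(\varepsilon'^{-1}\log N)}$. The only superfluous detail is your mention of ``taking $S_u=[q]$,'' which belongs to the proof of the lemma rather than its application, but this is harmless.
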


Now, we will proceed to show the reduction and, thus, prove Corollary~\ref{cor:qptas-dense}.

\begin{proof}[Proof of Corollary~\ref{cor:qptas-dense}]
  First of all, notice that, since $\frac{1}{1 + \varepsilon} = 1 - \Theta(\varepsilon)$. It is enough for us to show that there exists an $N^{O(\varepsilon^{-1}\delta^{-1}\log N)}$-time algorithm for satisfiable $\delta$-dense {\sc Max 2-CSP} that produces an assignment of value at least $1 - \varepsilon$.

  On input $(q, V, E, \{C_e\}_{e \in E})$, the algorithm works as follows:
  \begin{enumerate}
    \item Construct a {\sc Max 2-CSP} instance $(q, V, E', \{C'_e\}_{e \in E'})$ where $(V, E')$ is a complete graph and $C'_e$ is defined as $C_e$ if $e \in E$. Otherwise, $C_e := 1$. In other words, we put in dummy constraints that are always true just to make the graph complete.
    \item Run the algorithm from Corollary~\ref{cor:qptas-complete} on $(q, V, E', \{C'_e\}_{e \in E'})$ with $\varepsilon' = \varepsilon \delta$ and output the assignment got from the algorithm.
  \end{enumerate}

  To see that the algorithm indeed produces an assignment with value $1 - \varepsilon$ for the input instance, first observe that, since $(q, V, E, \{C_e\}_{e \in E})$ is satisfiable,  $(q, V, E', \{C'_e\}_{e \in E'})$ is trivially satisfiable. Thus, from Corollary~\ref{cor:qptas-complete}, the output assignment has value at least $1/(1 + \delta\varepsilon) \geq 1 - \delta \varepsilon$ with respect to $(q, V, E', \{C'_e\}_{e \in E'})$. In other words, the assignment does not satisfy at most $\delta \varepsilon n^2$ edges. Thus, with respect to the input instance, it satisfies at least $\delta n^2 - \delta \varepsilon n^2 = (1 - \varepsilon) \delta n^2$ edges. In other words, it is of value at least $1 - \varepsilon$ as desired.

  Lastly, note that the running time of this algorithm is determined by that of the algorithm from Corollary~\ref{cor:qptas-complete}, which runs in $N^{O(\varepsilon'^{-1}\log N)} = N^{O(\varepsilon^{-1}\delta^{-1}\log N)}$ time as desired.
\end{proof}

\section{Conclusions and Open Questions}
Finally, we conclude by listing the open questions and interesting directions related to the techniques and problems presented here. We also provide our thoughts regarding each question.
\begin{itemize}
\item {\em Can our algorithm be extended to work for {\sc Max $k$-CSP} for $k \geq 3$?} Other algorithms for approximating {\sc Max 2-CSP} such as those from~\cite{Alon:2003:RSA:963875.963877, AKK95, BMHS11} are applicable for {\sc Max $k$-CSP} for any value of $k$ as well. So it is possible that our technique can be employed for {\sc Max $k$-CSP} too.
\item {\em Can one also come up with an algorithm that approximates {\sc Max 2-CSP} to within $O(N^\varepsilon)$ factor for any $\varepsilon > 0$ for low-value dense {\sc Max 2-CSP}?} Our algorithm needs the value $\lambda$ to be $N^{-o(1)}$ in order to give such a ratio so it is interesting whether we can remove or relax this condition. However, we do not think that one can remove the condition completely because, with similar technique to the proof of Corollary~\ref{cor:qptas-dense}, we can arrive at a reduction from any {\sc Max 2-CSP} to dense {\sc Max 2-CSP} where the approximation ratio is preserved but the value decreases. This means that, if we can remove the condition on $\lambda$, then we are also able to refute the PGC. This argument nonetheless does not rule out relaxing the condition for $\lambda$ without removing it completely.
\item {\em Can our QPTAS be extended to unsatisfiable instances?} One of the main disadvantages of our QPTAS is that it requires the instance to be satisfiable. This renders our QPTAS useless against many problems such as {\sc Max $2$-SAT} and {\sc Max-Cut} because the satisfiable instances of those problems are trivial. If we can extend our QPTAS to work on unsatisfiable instances as well, then we may be able to produce interesting results for those problems.  Note, however, that, with similar argument to the preceding question, QPTAS for low-value instances likely does not exist. Instead, the case of unsatisfiable instances where~\cite{Alon:2003:RSA:963875.963877, AKK95, BMHS11} are successful is when they look for an additive error guarantee instead of a multiplicative one. Currently, it is unclear whether our technique can achieve such results.
\item {\em Can one arrive at a similar or even better algorithm using SDP hierarchies?} SDP hierarchies have been very useful in finding approximation algorithms for combinatorial optimization problems. A natural question to ask is whether one can apply SDP hierarchies to get similar results to ours. For example, can the $O(i)$-level of the Lasserre hierarchy produce an approximation algorithm with ratio $O(q^{1/i})$ for dense {\sc Max 2-CSP}? If so, then this may also be an interesting direction to pursue an algorithm with guarantee additive error discussed previously.
\end{itemize}

\bibliographystyle{plain}

\bibliography{bi}

\end{document}